\definecolor{processblue}{cmyk}{0.96,0,0,0}
\newtheorem{theorem}{Theorem}[section]
\newtheorem{proposition}[theorem]{Proposition}
\newtheorem{remark}[theorem]{Remark}
\def\beq{\begin{equation}}
\def\eeq{\end{equation}}
\def\beqna{\begin{eqnarray}}
\def\eeqna{\end{eqnarray}}
\def\bea{\begin{array}}
\def\ea{\end{array}}
\def\mr{{\mathcal R}}
\newcommand{\etal}{\textit{et al.~}}
\begin{document}
\title{
An epidemiological compartmental model with automated parameter estimation and forecasting of the spread of COVID-19
 with analysis of data from Germany and Brazil
}
\author{Adriano A. Batista $^1$}
\email{adriano@df.ufcg.edu.br}
\author{Severino Horácio da Silva$^2$}
\email{horacio@mat.ufcg.edu.br}
\date{\today}
\affiliation{$^{1}$ Departamento de Física, Universidade Federal de Campina Grande, 58051-900 Campina Grande PB,
    Brazil.\\
    $^2$ Departamento de Matemática,
    Universidade Federal de Campina Grande, 58051-900 Campina Grande PB,
    Brazil.
    } 
    
\begin{abstract}
In this work, we adapt the epidemiological SIR model to study the evolution of the dissemination of
COVID-19 in Germany and Brazil (nationally, in the State of Paraíba, and in the City of Campina Grande). We prove the well posedness and the continuous dependence of the model dynamics on its  parameters.
We also propose a simple probabilistic method for the evolution
of the active cases that is instrumental for the automatic
estimation of parameters of the epidemiological model.
 We obtained statistical estimates of the active cases
 based the probabilistic method and on the confirmed cases data.
 From this estimated time series we obtained a time-dependent
 contagion rate, which reflects a lower or higher
 adherence to social distancing by the involved populations.
 By also analysing the data on daily deaths, we obtained the daily lethality and recovery rates.
 We then integrate the equations of motion of the model
 using these time-dependent parameters.
 We validate our epidemiological model by fitting the official data of confirmed, recovered, death, and active cases due to
the pandemic with the theoretical predictions.
 We obtained very good fits of the data with this method.
 The automated procedure developed here could be used for
 basically any population with a minimum of extra work.
 Finally, we also propose and validate a forecasting method
 based on Markov chains for the evolution of the epidemiological
 data for up to two weeks.


\end{abstract}
\maketitle
\section{Introduction}
The COVID-19 pandemic already has reached practically the
whole planet.
According to the World Health Organization (WHO, Situation Report-41 \cite{who2020Report41}), although around 80\% of the infected people present mild symptoms
(equivalent to the common flu), older people and those
with a history of other diseases like diabetes, cardiovascular, and chronic respiratory syndrome can develop
serious problems after being infected by the SARS-CoV-2 virus. 
It was first identified in December 2019 in the City of Wuhan, Province of Hubei, China. From there it spread across Asia, Europe, and the other continents \cite{hui2020continuing}.
On March 11th, the WHO declared it to be a pandemic \cite{who2020}.

In this work, we investigate the spread of the pandemic
in 4 different scenarios: Germany, Brazil, the Brazilian
State of Paraíba, and the City of Campina Grande.
We use the data from Germany mostly as a benchmark test for our model, since very likely their COVID-19 data is one of the most accurate in the world, which is due to the widespread testing of its population. Furthermore, the social distancing, isolation, and use of personal protection equipment (PPE) there has been far more efficient than the measures taken in Brazil in containing the spread of COVID-19 infections. 

The first confirmed case in Brazil occurred in February 25th of 2020, when a 61-year-old man who had returned from Italy
tested positive and the first death due to the pandemic occurred on March 12.
It is very unlikely all cases of COVID-19 contaminations evolved
from this patient. 
As reported in other countries, there were multiple imported
infections, which usually drive the contagion rate to very high values in the initial stages of the epidemic.
In Brazil and in the majority of countries around the world
social distancing policies have been adopted in order to
decrease the rate of contagion and thus allow that health
systems do not collapse and have conditions to treat the
gravest cases of the disease.

In the State of Paraíba, according to the Paraíba Department of Health,
the first confirmed case of COVID-19 was registered on March 18th of 2020.
It was a man who lived in the City of João Pessoa that had returned from a trip
to Europe on February 29th. On March 31st the first death due to
the pandemic was recorded in the State.

Campina Grande is the second largest city in the State of Paraíba.
The social distancing policy was implanted in this city on March 20th of 2020,
with the closure of universities, schools, and non-essential stores.
The social distancing was implemented in a preventive form since the first case
of infection by the SARS-CoV-2 virus only came to be registered one week afterwards, on
March 27th. The first death due to COVID-19 in this city was only registered on April 19th.

In this work, we adapted the simple and well-known SIR epidemiological model, developed by
Kermack and Mckendrick in 1927 \cite{kermack1927contribution}, 
to study the evolution of the dissemination of COVID-19.
The SIR model is a well-known and tested epidemiological compartmental model that has been applied
to very diverse epidemics (see for example Refs.  \cite{alcaraz2012modeling, bastos2020modeling, isea2013mathematical, crokidakis2020data}). 
The model receives its name for dividing the population in three groups: susceptible, infected, and removed.
Although, SIR and SIR-like models, such as the SIRD and SEIR(D), are simple
models compared with far more detailed alternatives, such as
proposed by Ndairu \etal  \cite{Ndairou2020},
their simplicity is a strength when it comes to parameter estimation. More complex models have more parameters to be
determined, and hence more uncertainties that may render
them non viable for large scale applications to many populations.
Usually, the available epidemiological data is not complete enough to provide estimates for all the parameters of more complex models.
The model we propose here has three independent parameters that
need to be estimated from the epidemiological data: contagion
rate, average time duration of infection, and lethality rate.
We validated the proposed theoretical model with comparisons of official data of the numbers of confirmed, recovered, death,
and active cases due to COVID-19 infections in Germany, Brazil, the Brazilian State of Paraíba, and the City of Campina Grande, located in Paraíba.
We used a compartmental model called SIRD, which replaces the removed by the recovered and the deceased cases. Furthermore, in the present work we use a time-dependent contagion rate and time-dependent lethality and recovery rates. 

SIR(D) or SEIR(D) models with time-dependent contagion rates are not new \cite{smirnova2019forecasting},
even for the COVID-19 pandemic there are already several SIR and
SIR-like models with time-dependent parameters.
The most difficult part of this approach is the design of a parameter estimation method that is automated and renders the model accurate.
The early work by Fang \etal \cite{Fang2020} used a SEIR model in which they
estimated the parameters based on the epidemiological data using
statistical methods, but at the time of publication there were
only about 40 data points and scant comparison of data with
theoretical model time series.
Zhong \etal \cite{zhong2020early} used a time-dependent
SIR model in which they estimate the parameters from the
data of active and recovered cases.
The authors obtained big fluctuations in the contagion rate
and in the removal rate ($1/\tau$). This likely occurred because
they approximated the time derivative of the infected ($dI/dt$) on a daily basis and also, perhaps, because recovered cases data is often less reliable than the confirmed cases data.
Chen \etal \cite{chen2020time} also used a parameter estimation
technique of the contagion and the removal rates, but they did not provide information on the parameters obtained in their results explicitly.
Dehning \etal \cite{dehning2020inferring} estimated parameters
based on a Bayesian inference with the Markov chain Monte Carlo technique, but not many data points were available at the
time of publication and no estimates for active cases were provided.
Linka \etal \cite{Linka2020} also used a Bayesian parameter estimate in a SEIR model.
None of the above articles provide information on estimates
for death cases.
A comparison of models (SIR, SEIR, and a branching point process) highlighting the strengths and difficulties of each model, and stressing the importance of nonpharmaceutical public health interventions was made by Bertozzi \etal \cite{Bertozzi2020challenges}.


The present work is new in the way it estimates the parameters
of the epidemiological model dynamical system.
In addition to the adapted SIR model used, we develop a probabilistic model to obtain the recovery and the lethality
probabilities.
Based on this model, we make estimates for the
active cases.
The best fit for the active cases provides us
with the average time duration of infection, which is held fixed in our model.
With the help of this probabilistic model and the
epidemiological data, we also obtain the time-dependent 
parameters of the model: contagion, recovery, and lethality rates.
The estimate of the time-dependent
contagion rate is based on a fairly simple statistical analysis of the statistical estimate of the active cases.
We estimate the time-dependent lethality
and recovery rates based on a statistical analysis of the real death cases data.
In this way, one avoids the time-consuming task, for the programmer, of obtaining the contagion rate function that leads
to the best fit of the data. 
In this work, the best fit of the data is done automatically in one pass of integration.
Thus, this greatly reduces the time taken to fit the available
data with the theoretical curve.
This also allows for short-time forecasting from one to two weeks in advance. 
Furthermore, we present and validate a forecasting method based on Markov chains and on the parameter estimation method we use to make short-term predictions of the evolution of the epidemiological variables.

Additionally, we prove the well posedness of this model (existence, unicity, and continuous dependence on initial data). Also, an important theoretical issue we investigate is the dependence of the solutions on the parameters present in the model. As far as we know, this has not been proved yet for this type of model. Hence, we prove the continuous dependence of the solutions on the system parameters. We use techniques similar to those used in \cite{Silva2020} and \cite{Diekmann2005} to
accomplish this result.

We hope that this study of epidemiological dynamics be useful in stressing the importance of public health policies about the application, maintenance, and reinforcement of social distancing measures with the objective of avoiding the 
collapse of the health system.

We point out that an earlier version of this work was pre-published in ResearchGate \cite{batista2020epidemiological}. 

This work is organized in the following way:
In Section II, we propose our epidemiological model and we prove results on existence and uniqueness of solution and on the continuous dependence of the solution with respect to the initial data and the parameters present in the model. We also
develop the probabilistic model with estimates of the probabilities of
recovery and death. 
We also describe how to make statistical estimates for the various parameters used in our epidemiological model.
In Section III, we investigate the evolution of COVID-19 in Germany as a benchmark test for our model. Afterwards, we present our results and discuss about the evolution of the disease in Brazil, in the State of Paraíba, and in the City of Campina Grande.  
We validate the predictions of our theoretical model by fitting the official data.
Finally, in Section IV we draw our conclusions.

\section{Epidemiological model}
The evolution of the epidemiological model we propose is determined
 by the following ordinary differential equations (ODE) system 
\beq
\begin{aligned}
 \frac{dS}{dt} & = \nu(S+I+R)-\mu S-\kappa(t) SI,\\
\frac{dI}{dt} & = -\left(\mu +\frac1\tau\right) I+\kappa(t) SI,\\
\frac{dR}{dt} & = \rho(t) I-\mu R,\\
\frac{dM}{dt} & = \lambda(t) I,
\end{aligned}
\label{modeloSIR}
\eeq
where $S(t)$, $I(t)$, $R(t)$ are, respectively, the normalized
susceptible, infected, and recovered populations at time $t$.
We define $M(t)$ as the normalized number
of accumulated deaths due to the epidemic.
The introduction of $M(t)$ is mostly for convenience, so that we 
do not have to perform a separate integration from the numerical
routine used to integrate the differential equations of our model.
The normalization of the variables was achieved by dividing them by $P_0$, the total initial population of the region considered.
For the sake of simplification, we assume that the population is homogeneous such that all the susceptible individuals have
the same probability of being contaminated and  the infected
individuals have the same probability of recovery or death
due to the infection.
We also suppose that the population evolves in such a way that the newborn babies are all susceptible and the recovered are all immune.
The parameter $\nu$ is the population birth rate, $\mu$ is the
death rate before the onset of the pandemic, $\kappa(t)$ is the contagion rate function, $\rho(t)$ is the recovery rate, and $\lambda(t)$ is the lethality rate due to the epidemic. 
Although $\rho(t)$ and $\lambda(t)$ are changing over time, $\tau^{-1} = \rho(t) + \lambda(t) $ is held constant.

It is important to point out further differences between the original model (\ref{modeloSIR}) and the proposed SIR model \cite{kermack1927contribution}. 
 This could become relevant if the pandemic lasts for over a year.
This is relevant also as a source of comparison with the
death rates due to the COVID-19.
One important difference we introduced is that we now allow for time variation
in the contagion rate $\kappa(t)$, which reflects changes in confinement, social distancing, and mask use adopted by the population.
In addition, we allow for time variation in the recovered and lethality rates, which might reflect possible improvements in the treatment efficacy of COVID-19 and/or changes in the demographics of the infected. Here, we also allow variations in the total population, taking into account the contributions of birth and death rates to the population evolution.



\subsection{Well posedness}

In this subsection we prove that,  for a given time $T>0$, 
for  each initial value $(S_0,I_0,R_0,M_0)$, with  $\kappa(t)$, $\rho(t)$ and $\lambda(t)$ varying continuously in the bounded interval $[0,T]$ 
and with the
other parameters fixed, one can show that the ODE system of Eq.~\eqref{modeloSIR} admits existence and uniqueness of solution in the time interval $[0,T]$. Furthermore, we prove that the solutions are continuously dependent on the initial data and on the parameters $\kappa$, $\rho$ and $\lambda$.

To prove the existence and uniqueness of solution of Eq. ~\eqref{modeloSIR},  in the Euclidean space $\mathbb{R}^{4}$, it is sufficient to verify that  the function
given by the right hand side of Eq. ~\eqref{modeloSIR} is Lipschitz continuous  with respect to spatial variable   (see \cite{Hale}).

For this, let $\xi(t)=(S(t),I(t),R(t),M(t))$ be and define $g:\mathbb{R}\times \mathbb{R}^{4} \to \mathbb{R}^{4}$ as
\begin{equation}
    g(t, \xi)=(g_1(t, \xi), g_2(t, \xi), g_3(t, \xi),g_4(t, \xi)), \label{F}
\end{equation}
where the $g_j: \mathbb{R} \times\mathbb{R}^{4} \to \mathbb{R}$ are the coordinate functions of $g$ given by
$$
g_1(t, \xi)=\nu(S+I+R)-\mu S-\kappa(t) SI,
$$
$$
g_2(t, \xi)=-\left(\mu +\frac1\tau\right) I+\kappa(t) SI,
$$
$$
g_3(t, \xi)=\rho(t) I-\mu R,
$$
and
$$
g_4(t, \xi)=\lambda(t) I,
$$
with $\kappa, ~ \rho, ~\lambda: [0, T] \to \mathbb{R}_{+}$ continuous functions.
\begin{proposition}
The function given in (\ref{F}) is Lipschitz continuous  with respect to the second variable.\label{Lip}
\end{proposition}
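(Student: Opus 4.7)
The plan is to establish a local Lipschitz estimate for $g$ in the spatial variable $\xi$: on each bounded set $B \subset \mathbb{R}^{4}$ I will exhibit a constant $L = L(B,T)$ such that $\|g(t,\xi) - g(t,\bar{\xi})\| \leq L\,\|\xi - \bar{\xi}\|$ for every $t \in [0,T]$ and every $\xi,\bar{\xi} \in B$. This local statement is the natural one for this system, because the bilinear terms $\kappa(t)SI$ in $g_1$ and $g_2$ prevent any global Lipschitz bound on all of $\mathbb{R}^{4}$; and since the unknowns are normalized population fractions, restricting to a bounded set is consistent with the modeling assumptions and suffices for the existence/uniqueness result of Hale cited above.

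First I would exploit the continuity of $\kappa,\rho,\lambda$ on the compact interval $[0,T]$ to extract the finite constants $\kappa^{*} = \max_{[0,T]} \kappa$, $\rho^{*} = \max_{[0,T]} \rho$, $\lambda^{*} = \max_{[0,T]} \lambda$. Fixing $\delta > 0$ and restricting to $\|\xi\|_{\infty},\|\bar{\xi}\|_{\infty} \leq \delta$, I would then estimate $|g_j(t,\xi) - g_j(t,\bar{\xi})|$ componentwise. The purely linear contributions in $g_1$ (namely $\nu(S+I+R) - \mu S$), in $g_2$ (namely $-(\mu + 1/\tau)I$), in $g_3$ (namely $\rho(t)I - \mu R$) and in $g_4$ (namely $\lambda(t)I$) all yield immediate bounds in terms of the fixed parameters $\nu,\mu,\tau^{-1}$ and the maxima $\rho^{*},\lambda^{*}$.

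The one nontrivial piece, and the only real obstacle, is the bilinear product $\kappa(t)SI$ that appears in both $g_1$ and $g_2$. Here I would use the standard product-difference identity
\begin{equation*}
SI - \bar{S}\bar{I} = S(I - \bar{I}) + \bar{I}(S - \bar{S}),
\end{equation*}
which, combined with $|S| \leq \delta$ and $|\bar{I}| \leq \delta$, gives $|\kappa(t)(SI - \bar{S}\bar{I})| \leq \kappa^{*}\delta\bigl(|S - \bar{S}| + |I - \bar{I}|\bigr)$. Summing the four componentwise estimates, for instance in the $\ell^{1}$ norm on $\mathbb{R}^{4}$, yields a single Lipschitz constant
\begin{equation*}
L = L(\nu,\mu,\tau,\rho^{*},\lambda^{*},\kappa^{*},\delta)
\end{equation*}
that is uniform in $t \in [0,T]$. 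I do not expect further complications: once the bilinear term is tamed by the boundedness of $B$, the rest is routine termwise bookkeeping, since the time-dependence enters only through $\kappa,\rho,\lambda$, which are bounded on $[0,T]$ by compactness.
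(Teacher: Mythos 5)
Your proposal is correct and follows essentially the same route as the paper's proof: take suprema of $\kappa,\rho,\lambda$ over the compact interval $[0,T]$, split the bilinear term via $SI-\bar S\bar I = S(I-\bar I)+\bar I(S-\bar S)$, and assemble componentwise estimates in the $\ell^1$ (sum) norm. The only difference is that you state explicitly that the result is a \emph{local} Lipschitz estimate on a bounded set, whereas the paper silently imposes the same restriction by assuming $|S|\leq S_{max}$, $|I|\leq I_{max}$, etc.; your phrasing is the more careful rendering of the identical argument.
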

\begin{proof}
 Initially we will denote 
 $$
 \kappa_{max}=\sup_{t\in [0, T]} \kappa(t)~ \mbox{and}~ \rho_{max}=\sup_{t\in [0,~ T]}\rho(t)
 $$ 
 and we will consider
 $\mathbb{R}^{4}$ with the sum norm, that is, for $\xi=(S,I,R,M)$,
 $$
 \|\xi\|=|S|+|I|+|R|+|M|,
 $$
 where $|S|\leq S_{max}$, $|I|\leq I_{max}$, $|R|\leq R_{max}$ and $|M|\leq M_{max}$.
 
 Hence, it is easy to see that
\begin{eqnarray*}
|g_1(t,~\xi_1)-g_1(t,~\xi_2) &\leq& |\nu - \mu||S_1-S_2|+\nu|I_1-I_2|+\nu|R_1-R_2|+\kappa(t)|S_1I_1-S_2I_2|\\
&\leq& |\nu - \mu||S_1-S_2|+\nu|I_1-I_2|+\nu|R_1-R_2|+ \kappa(t)|S_1||I_1-I_2|\\
&+&\kappa(t)|I_2||S_2-S_1|\\
&\leq& |\nu - \mu||S_1-S_2|+\nu|I_1-I_2|+\nu|R_1-R_2|+ 
\kappa_{max} S_{max} |I_1 -I_2|\\
&+& \kappa_{max}~ I_{max}|S_1 -S_2|\\
&\leq& \left(|\nu - \mu|+\kappa_{max} I_{max}\right)|S_1 -S_2| 
+ (\nu+\kappa_{max} S_{max} )|I_1 -I_2|\\
&+& \nu |R_1-R_2|.
\end{eqnarray*}
Hence, by writing $L_1=\sup\{|\nu - \mu| +\kappa_{max} I_{max} , ~~ \nu  +\kappa_{max} S_{max} \}$, we obtain
\begin{eqnarray*}
|g_1(t,~\xi)-g_1(t,~\xi)| &\leq&  L_1 \left(|S_1-S_2|+|I_1-I_2|+|R_1 -R_2|\right)\\
&\leq&  L_1\|(\xi)-(\xi_2)\|.
\end{eqnarray*}
Similarly
\begin{eqnarray*}
|g_2(t, ~\xi_1)-g_2(t,~\xi_2)|&\leq& \left(\mu + \frac{1}{\tau} \right)|I_2-I_1|+ \kappa(t)|S_1I_1-S_2I_2|\\
&\leq& \left(\mu + \frac{1}{\tau} \right)|I_2-I_1|+\kappa(t)|S_1||I_2-I_1|+ \kappa(t)|I_2|S_1-S_2|\\
&\leq& \kappa_{max} ~I_{max}|S_1-S_2| +\kappa_{max}~S_{max}|I_1-I_2|+ (\mu +\frac{1}{\tau})|I_2-I_1|;
\end{eqnarray*}
thus, writing $L_2=\max \{\kappa_{max} I_{max}, ~~\kappa_{max}~S_{max} +  ~\mu +\frac{1}{\tau}\}$ we have
\begin{eqnarray*}
|g_2(t,~\xi_1)-g_2(t, \xi_2)| &\leq& L_2 \left(|S_1-S_2|+|I_1-I_2| \right)\\
&\leq& L_2\|\xi_1-\xi_2)\|;
\end{eqnarray*}
\begin{eqnarray*}
|g_3(t,~\xi_1)-g_3(t,\xi_2)| &\leq& \rho(t)|I_1-I_2|+\mu |R_2-R_1|\\
&\leq&\rho_{max}|I_1-I_2|+ \mu|R_1-R_2|\\
&\leq& L_3 \left(|I_1-I_2|+ |R_1-R_2|\right)\\
&\leq& L_3 \|\xi_2-\xi_2\|,
\end{eqnarray*}
where $L_3=\max\{\rho_{max}, ~\mu\}$,
and, 
$$
|g_4(t,~\xi_1)-g_4(t,~\xi_2)| \leq \lambda(t)|I_1-I_2|.
$$
Since $\frac{1}{\tau}=\rho(t)+\lambda(t)$, it follows that $\lambda(t)= \frac{1}{\tau}-\rho(t)$. Thus  $$
 \lambda_{max}=\sup_{t\in [0, ~T]} \left(\frac{1}{\tau}-\rho(t)\right) < \tau^{-1}.
 $$
 Hence
$$
|g_4(t,~\xi_1)-g_4(t,~\xi_2)| \leq \frac{1}{\tau}|I_1-I_2|\leq \frac{1}{\tau}  \|\xi_1-\xi_2\|.
$$

Therefore,
$$
\|g(t,\xi_1)-g(t,\xi_2)\| \leq L \|\xi_1-\xi_2\|,
$$
where $L=\max\{L_1, ~L_2, ~L_3, ~\frac{1}{\tau}\}$. This concludes the proof.
\end{proof}

\begin{remark}
From Propositin \ref{Lip} and  Picard-Lindelöf Theorem  \cite{Hale}, it follows that for  each initial value $\xi_0=(S_0,I_0,R_0,M_0)$, with  $\kappa(t)$, $\rho(t)$ and $\lambda(t)$ varying continuously in the bounded intervals $[0,T]$, the ODE system of Eq.~\eqref{modeloSIR} admits existence and uniqueness of solution in the time interval $[0,T]$, which is given, for $t\in [0,T]$, by
\begin{equation}
\xi(t)=\xi_0
+ \int_{0}^{t}g(s, \xi(s))ds.\label{integral}
\end{equation}
Furthermore, using (\ref{integral}) and Gronwall inequality  \cite{Hale}, we obtain the sensitivity with respect to the initial data. 

Indeed, denoting by $\xi(t,\xi_0)$ the solution of Eq.  (\ref{modeloSIR}) that at $t=0$ is $\xi_0$, we have
\begin{eqnarray*}
\|\xi(t, \xi_0^1)-\xi(t, \xi_0^2)\|&\leq& \|\xi_0^1 - \xi_0^2\| + \int_{0}^{t}\|g(s, \xi(s, \xi_0^1))-g(s, \xi(s, \xi_0^2))\|ds\\
&\leq& \|\xi_0^1 - \xi_0^2\|+ \int_{0}^{t}L\|\xi(s, \xi_0^1)-\xi(s, \xi_0^2)\|ds.
\end{eqnarray*}
Hence, from Gronwall inequality  
\begin{eqnarray*}
\|\xi(t, \xi_0^1)-\xi(t, \xi_0^2)\|&\leq& \|\xi_0^1 - \xi_0^2\|e^{ \int_{0}^{t}Lds}\\
&\leq& e^{T}\|\xi_0^1 - \xi_0^2\| \to 0,~\mbox{ as} ~ \|\xi_0^1 - \xi_0^2\| \to 0.
\end{eqnarray*}
\end{remark}

\subsection{Continuous dependence on parameters}
In this subsection we investigate the continuous dependence of the solutions of ~\eqref{modeloSIR} with respect to the variation of  its main parameters 

\begin{proposition} Under the same  assumptions and notation from Proposition \ref{Lip}, the solution $(S(t),I(t),R(t),M(t))$   is continuous with respect to parameters $\kappa(t)$, $\rho(t)$ and $\lambda(t)$ present in Eq. \eqref{modeloSIR}.
\end{proposition}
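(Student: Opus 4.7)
The plan is to reduce continuous dependence on parameters to an application of Gronwall's inequality, mirroring the derivation of continuous dependence on initial data given in the remark after Proposition \ref{Lip}. Fix reference parameter functions $(\kappa,\rho,\lambda)$ and a perturbed triple $(\tilde\kappa,\tilde\rho,\tilde\lambda)$ in a sup-norm neighborhood, let $g$ and $\tilde g$ denote the corresponding right-hand sides of Eq.~\eqref{modeloSIR}, and let $\xi(t)$ and $\tilde\xi(t)$ be the two solutions of the system with the common initial value $\xi_0$.

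First I would write both solutions in the integral form \eqref{integral} and perform the standard additive decomposition
$$
g(s,\xi(s)) - \tilde g(s,\tilde\xi(s)) = \bigl[g(s,\xi(s)) - g(s,\tilde\xi(s))\bigr] + \bigl[g(s,\tilde\xi(s)) - \tilde g(s,\tilde\xi(s))\bigr].
$$
The first bracket is dominated in sum norm by $L\,\|\xi(s)-\tilde\xi(s)\|$ via Proposition \ref{Lip}, where the Lipschitz constant $L$ is chosen uniformly over the chosen parameter neighborhood (simply replacing $\kappa_{max}$ and $\rho_{max}$ by upper bounds valid for both triples). The second bracket depends only on the parameter differences: componentwise it collapses to expressions of the form $-[\kappa(s)-\tilde\kappa(s)]\,\tilde S(s)\tilde I(s)$ in the $S$- and $I$-equations, $[\rho(s)-\tilde\rho(s)]\,\tilde I(s)$ in the $R$-equation, and $[\lambda(s)-\tilde\lambda(s)]\,\tilde I(s)$ in the $M$-equation. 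Using the \emph{a priori} bounds $S_{max}$ and $I_{max}$ already introduced in Proposition \ref{Lip}, these terms are dominated by a constant $C$ (depending only on those bounds) times
$$
\Delta := \|\kappa-\tilde\kappa\|_\infty + \|\rho-\tilde\rho\|_\infty + \|\lambda-\tilde\lambda\|_\infty,
$$
where the supremum is taken over $[0,T]$.

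Combining these two estimates yields an integral inequality of the form
$$
\|\xi(t)-\tilde\xi(t)\| \;\leq\; C\,T\,\Delta + \int_0^t L\,\|\xi(s)-\tilde\xi(s)\|\,ds,
$$
to which I would apply Gronwall's inequality, exactly as in the initial-data remark, obtaining $\|\xi(t)-\tilde\xi(t)\| \leq C\,T\,\Delta\,e^{LT}$ uniformly in $t\in[0,T]$. Since the right-hand side vanishes as $\Delta\to 0$, this establishes the desired continuity of the solution with respect to $\kappa$, $\rho$, and $\lambda$ in the sup norm.

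The main obstacle will be the bookkeeping needed to ensure that the \emph{a priori} bounds $S_{max}, I_{max}$ (and hence both $L$ and $C$) can be chosen uniformly as the parameters vary in a small neighborhood. Because $\tau^{-1}=\rho(t)+\lambda(t)$ is held fixed, the total population $N(t)=S(t)+I(t)+R(t)$ satisfies $\dot N=(\nu-\mu)N-\lambda(t)I$, which yields the parameter-independent bound $N(t)\leq N(0)e^{(\nu-\mu)^+T}$ on $[0,T]$. This controls $S,I,R$ individually, and $M$ then follows from direct integration of the last equation. This removes the only potential circularity in applying the Lipschitz estimate uniformly and closes the argument.
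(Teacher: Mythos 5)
Your proposal is correct and takes essentially the same route as the paper: both write the two solutions in integral form, split the difference of right-hand sides into a Lipschitz term in the state difference plus a term controlled by the sup-norm distance of the parameters (the paper performs the same add-and-subtract componentwise, e.g.\ inserting $\kappa_0(s)S(s)I(s)$, while you state it abstractly as $[g(s,\xi)-g(s,\tilde\xi)]+[g(s,\tilde\xi)-\tilde g(s,\tilde\xi)]$), and conclude with Gronwall's inequality, yielding a bound of the form $C\,T\,\Delta\,e^{LT}$ just as the paper obtains $2T(S_{max}I_{max}+I_{max})\,dist(\theta,\theta_0)\,e^{(G_1+G_2+G_3+1/\tau)T}$. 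Your closing justification of the a priori bounds via the equation for $N=S+I+R$ is a small extra the paper merely assumes (it additionally requires nonnegativity of $S$, $I$, $R$), but it does not alter the argument.
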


\begin{proof}
Define $\theta: [0,T] \to \mathbb{R}^3$ by  $$
\theta(t)=(\kappa(t),\rho(t), \lambda(t)).
$$ 
Consider the metric defined on space of the continuous functions of $[0,~ T]$ in $\mathbb{R}^{3}$, 
$\mathbb{C}([0,T], \mathbb{R}^{3})$,
given by 
$$
dist(\theta, \theta_0)= \sup_{t\in [0,~T]}|\kappa(t)-\kappa_0(t)|+ \sup_{t\in [0,~T]}|\rho(t)-\rho_0(t)| + \sup_{t\in [0,~T]}|\lambda(t)-\lambda_0(t)|.$$ 
Now, we denote by $\xi_{\theta}(t)$ the solutions of Eq. (\ref{modeloSIR}) with respect to parameter $\theta(t)=(\kappa(t),\rho(t), \lambda(t))$ such that $\xi_{\theta}(0)=\xi_0$ and we denote by $\xi_{\theta_{0}}(t)$ the solutions with respect to parameter $\theta_{0}(t)=(\kappa_{0}(t),\rho_{0}(t), \lambda_{0}(t))$ such that $\xi_{\theta_{0}}(0)=\xi_0$. 

We need to show that
\begin{eqnarray*}
\|\xi_{\theta}(t) - \xi_{\theta_{0}}(t)\| \to 0, ~\mbox{as}~ ~dist(\theta, \theta_0) \to 0.
\end{eqnarray*}
For this, note that

\begin{eqnarray}
\|\xi_{\theta}(t) - \xi_{\theta_{0}}(t)\| \leq    
\int_{0}^{t} \|g(s,~\xi_{\theta}(s)) - g(s,~\xi_{\theta_{0}}(s))\|ds. \label{est_g}
\end{eqnarray}
Now
\begin{eqnarray*}
|g_1(s,~\xi_{\theta}(s)) - g_1(s,~\xi_{\theta_{0}}(s)))| &=& |\nu [S(s)-S_0(s)] + \nu [I(s)-I_0(s)]+ \nu [R(s)-R_0(s)]\\
&+& \mu[S_0(s)-S(s)] + \kappa(s)S(s)I(s) -\kappa_0(s)S_0(s)I_0(s)| \\
&\leq& (\nu +\mu) |S(s)-S_0(s)| + \nu |I(s)-I_0(s)|+ \nu |R(s)-R_0(s)|\\
&+& |\kappa(s)S(s)I(s) -\kappa_0(s)S(s)I(s)|\\
&+&|\kappa_0(s)S(s)I(s)- \kappa_0(s)S_0(s)I_0(s)|.
\end{eqnarray*}
But
\begin{eqnarray*}
|\kappa(s)S(s)I(s) -\kappa_0(s)S(s)I(s)|= |S(s)I(s)||\kappa(s)-\kappa_0(s)|
\end{eqnarray*}
and
\begin{eqnarray*}
|\kappa_0(s)S(s)I(s)- \kappa_0(s)S_0(s)I_0(s)|&=& |  \kappa_0(s)|S(s)I(s)-S_0(s)I_0(s)|\\
&\leq& \kappa_0(s)\left(|S(s)||I(s)-I_0(s)|+|I_0(s)||S(s)-S_0(s)|\right).\\
\end{eqnarray*}
Hence
\begin{eqnarray*}
|g_1(s,~\xi_{\theta}(s)) - g_1(s,~\xi_{\theta_{0}}(s))| 
&\leq& (\nu +\mu) |S(s)-S_0(s)| + \nu |I(s)-I_0(s)|+ \nu |R(s)-R_0(s)|\\
&+& S_{max} ~I_{max}\sup_{s\in [0,T]}|\kappa(s)-\kappa_0(s)|\\
&+& {\kappa_0}_{max}\left(S_{max}~|I(s)-I_0(s)|+{I_{0}}_{max}||S(s)-S_0(s)|\right)\\
&\leq& (\nu +\mu + {\kappa_0}_{max} ~{I_{0}}_{max}) |S(s)-S_0(s)|\\
&+&  (\nu + {\kappa_0}_{max} ~S_{max}) |I(s)-I_0(s)|
+ \nu |R(s)-R_0(s)|\\
&+& S_{max}I_{max} ~dist(\theta, \theta_0).
\end{eqnarray*}
Thus, by writing $G_1=\{\nu +\mu + {\kappa_0}_{max} ~{I_{0}}_{max}, ~\nu + {\kappa_0}_{max} ~S_{max}\}$, we obtain
\begin{eqnarray}
|g_1(s,~\xi_{\theta}(s)) - g_1(s,~\xi_{\theta_{0}}(s)))|\leq G_1 \|\xi_{\theta}(s)- \xi_{\theta_{0}}(s)\| + S_{max}I_{max} ~dist(\theta, \theta_0).\label{est_G1}
\end{eqnarray}
\begin{eqnarray}
|g_2(s,~\xi_{\theta}(s)) - g_2(s,~\xi_{\theta_{0}}(s))|
&=& \left|\left(\mu + \frac{1}{\tau}\right)(I_0-I) + \kappa(s)S(s)I(s)-\kappa_0(s) S_0(s)I_0(s)\right|\nonumber\\
&\leq& \left(\mu + \frac{1}{\tau}\right) |I_0-I|+ |S(s)I(s)||\kappa(s)-\kappa_0(s)|\nonumber\\
&+&|\kappa_0(s)||S(s)I(s)-S_0(s)I_0(s)|\nonumber\\
&\leq& \left(\mu + \frac{1}{\tau}\right) |I_0-I|+ S_{max} ~I_{max}|\kappa(s)-\kappa_0(s)|
\nonumber\\
&+& {\kappa_0}_{max} \left(|S(s)||I(s)-I_0(s)| + |I_0(s)||S(s)-S_0(s)|\right)\nonumber\\
&\leq& {\kappa_0}_{max} ~{I_0}_{max}|S(s)-S_0(s)|+\left(\mu + \frac{1}{\tau} + {\kappa_0}_{max} ~S_{max}\right) |I_0-I|\nonumber\\
&+& S_{max} ~I_{max} ~dist(\theta,~ \theta_0)\nonumber\\
&\leq& G_2 \|\xi_{\theta}-\xi_{\theta_0} \| + S_{max} ~I_{max} ~dist(\theta,~ \theta_0), \label{est_G2}
\end{eqnarray}
where $G_2=\max\{{\kappa_0}_{max} ~{I_0}_{max}, ~ \mu + \frac{1}{\tau} + {\kappa_0}_{max} ~S_{max}\}$.

\begin{eqnarray*}
|g_3(s,~\xi_{\theta}(s)) - g_3(s, \xi_{\theta_{0}}(s))| &=& |\rho(s)I(s)-\rho_0(s)I_0(s) + \mu[R_0(s)-R(s)]|\nonumber\\
&\leq& |I(s)| |\rho(s)-\rho_0(s)| + |\rho_0(s)||I(s)-I_0(s)|\nonumber\\
&+&\mu[R_0(s)-R(s)]|\nonumber\\
&\leq& I_{max}\sup_{s\in [0,~T]}|\rho(s)-\rho_0(s)|+\left(\sup_{s\in [0,~T]}\rho_0(s)\right)|I(s)-I_0(s)|\nonumber\\
&+&\mu|R_0(s)-R(s)|\nonumber\\
&\leq&  I_{max}dist(\theta, \theta_0) + {\rho_{0}}_{max}|I(s)-I_0(s)|\nonumber\\
&+&\mu|R_0(s)-R(s)|\nonumber\\
&\leq& G_3 \|\xi_{\theta}(s)-\xi_{\theta_{0}}(s)\|+I_{max} ~dist(\theta, \theta_0),\label{est_G3}
\end{eqnarray*}
where $G_3=\max\{{\rho_{0}}_{max},~\mu
\}$. Now,
\begin{eqnarray*}
|g_4(s,~\xi_{\theta}(s)) - g_4(s,~\xi_{\theta_{0}}(s))| &=& |\lambda(s)I(s)-\lambda_0(s)I_0(s)|\nonumber\\
&\leq& |I(s)||\lambda(s)-\lambda_0(s)|+ \lambda_0(s)|I(s)-I_0(s)|\nonumber\\
&\leq&I_{max}~\left(\sup_{s\in [0,T]}|\lambda(s)-\lambda_0(s)|\right) + \left(\frac{1}{\tau}-\rho_0(s)\right) ~|I(s)-I_0(s)|\nonumber\\
&\leq&  \frac{1}{\tau} ~\|\xi_{\theta}(s)-\xi_{\theta_{0}}(s)\| +I_{max}~dist(\theta, \theta_0).\label{est_g4}
\end{eqnarray*}
Thus, by using (\ref{est_G1}), (\ref{est_G2}), (\ref{est_G3}) and (\ref{est_g4}) in (\ref{est_g}), we have
\begin{eqnarray*}
\|\xi_{\theta}(t) - \xi_{\theta_{0}}(t)\| &\leq& \int_{0}^{t}\{ 2(S_{max}I_{max} + I_{max})dist(\theta, \theta_0)+ (G_1+G_2+G_3+ \frac{1}{\tau}) \|\xi_{\theta}(s) - \xi_{\theta_{0}}(s)\|\}ds\\
&\leq& 2T(S_{max}I_{max} + I_{max})dist(\theta, \theta_0)+\int_{0}^{t} \left(G_1+G_2+G_3+ \frac{1}{\tau}\right)\|\xi_{\theta}(s) - \xi_{\theta_{0}}(s)\|ds.
\end{eqnarray*}
Therefore, from Gronwall Lemma, it follows that
\begin{eqnarray*}
\|\xi_{\theta}(t) - \xi_{\theta_{0}}(t)\| &\leq& 2T(S_{max}I_{max} + I_{max})dist(\theta, \theta_0)e^{\left(G_1+G_2+G_3+ \frac{1}{\tau}\right)T} \to 0,
\end{eqnarray*}
 as $dist(\theta, \theta_0)$.
 \end{proof}

\section{Estimates for the parameters }

In this section, we explain  some.
estimates for the parameters present in the model.

\subsection{Reproduction number}
It is of paramount importance to know if a contagious
disease will become epidemic or not in a population.
It is also important to know when  it will be possible
to control an epidemic, that is, when it will be possible
to block its growth.
This will happen when $\frac{dI}{dt}\leq 0$.
From Eq.~\eqref{modeloSIR}, we verify that this condition
is equivalent to
\begin{equation}
    -\left(\mu+\frac1\tau\right)+\kappa S(t)\leq0\implies \frac{\kappa S(t)}{\mu+\frac1\tau}\leq 1.
\end{equation}
In the beginning of the epidemic we obtain that the value
of the following ratio
\[
R_0=\frac{\kappa S_0}{\mu+1/\tau},
\]
known in the literature \cite{murray2007mathematical} as
the basic reproduction number, is what indicates 
whether we will have an epidemic or not.
When $R_0>1$, the disease will spread,
whereas when $R_0 <1$ the contagion loses strength and the dissemination of the virus will be controlled.
In our case $S_0=1$ and $S(t)\leq 1$, thus
at any time after the onset of the epidemic
the disease will stop spreading when 
\begin{equation}
    R_0(t)=\frac{\kappa(t) S(t)}{\mu+1/\tau}\leq 1.
\end{equation}
We have that $R_0<1$ is a sufficient condition that the
epidemic will enter remission, but in general it is not
a necessary condition.
The necessary condition is that $R_0(t)<1$.
Although,  we are just over 5 months into the
current epidemic in Brazil at the time of writing, $M(t)<<1$,
$S(t)\approx1$,
the critical condition is still approximately $R_0=1$ and
the critical value of $\kappa(t)$ is $\kappa^*=\mu+1/\tau$.

As there is no efficacious treatment against 
COVID-19 at the time of writing this paper to the authors' knowledge, it is not yet possible to easily alter the average
time of infection $\tau$. 
As $1/\tau>>\mu$, 
thus the only viable manner of decreasing $R_0(t)\approx\kappa(t)\tau$ is by reducing the value of $\kappa(t)$, which can be obtained with social distancing measures and the use of PPEs.
\subsection{Probabilistic model}
\label{let_rec}
It is fundamental that we have good estimates for the
recovery and the lethality rates.
In order to obtain these estimates we will use a very
simple probabilistic model.
Suppose that a person is infected at a given instant $n$ (which may be a day, an hour, or a minute for example), then the probability that the infected remains sick until the following
instant $n+1$ is $q$, the probability that the infected recovers
in the next instant is $p$,
and the probability that the infected dies is
$s$, in such a way that $p+q+s=1$. 
Here, we suppose that $q$ and $p+s$ remain constant during the course of the disease.
Hence, we have the following probability table
\begin{table}[ht]\caption{Probabilistic model} 
\label{tabp}
\begin{tabular}{c|ccccccc}\hline
Situation $\setminus$ Instant & 0 & 1  & 2 & \ldots & $n$& \ldots\\\hline
Recovered& $p$ & $qp$ & $q^2p$ &  \ldots &$q^np$ &\ldots\\
Death& $s$ & $qs$ & $q^2s$ &  \ldots &$q^ns$ &\ldots\\\hline
\end{tabular}
\end{table}
\FloatBarrier
Note that the normalization
\[
\sum_{n=0}^\infty q^np+\sum_{n=0}^\infty q^ns=\frac{p+s}{1-q}=1,
\]
implies that this probabilistic model is well defined.

If $n$ is sufficiently large, only two outcomes are possible: either the infected individual recovers or dies. 
Hence, based on the Table \ref{tabp} we find that
the probability of recovery and of death are, respectively, 
\beq
\begin{aligned}
P_\rho &=\sum_{n=0}^\infty pq^n=p/(1-q),\\
P_\lambda &=\sum_{n=0}^\infty sq^n=s/(1-q).
\end{aligned}
\label{probRecDea}
\eeq
Using these probabilities we find that the average number
of instants (minutes, hours, days, etc) of the infection is given by
\beq
\bar n=\sum_{n=1}^\infty(p+s)nq^n=(p+s)F_1(q)=(1-q)\sum_{n=1}^\infty nq^n,
\eeq
in which the summation $F_1(q)=\sum_{n=1}^\infty nq^n$ 
can be calculated in the following form
\[
qF_1(q)=\sum_{n=1}^\infty nq^{n+1}=\sum_{n=2}^\infty (n-1)q^{n}
=\sum_{n=2}^\infty nq^{n}-\sum_{n=2}^\infty q^{n}
= F_1(q)-\sum_{n=1}^\infty q^{n}=F_1(q)-\frac q{1-q}.
\]
Hence, we obtain
\beq
F_1(q)= \frac{q}{(1-q)^2}.
\eeq
Therefore, we find that the average number of
instants of the infection is
\begin{equation}
\bar n = \frac q{1-q},
\end{equation}
from where we obtain that $q=\bar n/(1+\bar n)$.
We can also find that the average number of time intervals until recovery is given by
\[
\bar n_\rho= pF_1(q)=\frac{pq}{(1-q)^2}=P_\rho\bar n
\]
and the average number of time intervals until death is
\[
\bar n_\lambda=sF_1(q)=\frac{sq}{(1-q)^2}=P_\lambda \bar n.
\]
Note that $\bar n=\bar n_\rho+\bar n_\lambda$, that is, 
the average infection time span is the sum of the average time span for recovery and the
average time span until death.
If only these two processes were present, it would lead to the following difference
equation for the number of infected
\[
I(n+1)=I(n)-(p+s)I(n)=I(n)-(1-q)I(n)=I(n)-\frac1{1+\bar n}I(n).
\]
If we take $n$ to indicate the $n$-th time interval,
such as a minute, in which there is not much variation
in the quantities $S$, $I$, $R$, and $M$, 
hence, we can approximate
\[
\frac{dI}{dt}\approx \frac{\Delta I}{\Delta t}=-\frac{1}{(1+\bar n)\Delta t}I(t_n),
\]
in which $t_n=n\Delta t$.
In this work, we take $\Delta t=1~$hour$~ =1\rm{~day}/24$.
The average time span of infection can be obtained from the 
following equation
\beq
\frac1\tau=\lambda+\rho=\frac{1}{(1+\bar n)\Delta t}.
\eeq

Hence, we obtain the following expressions for the rates of
lethality and recovery
\begin{equation}
    \begin{aligned}
    \lambda &= \frac{\bar n_\lambda}{\bar n\tau}=\frac{P_\lambda}{\tau},\\
     \rho &=\frac{\bar n_\rho}{\bar n\tau}= \frac{P_\rho}{\tau}.
    \end{aligned}
    \label{lambda_rho}
    \end{equation}
\subsubsection{Standard deviation}
\label{sigma}
Here, we calculate the standard deviation for this probabilistic process in the number of time intervals $n$
of infection, from the contamination until recovery or death.
In order to achieve that, we have first to calculate the
sum
\begin{equation}
    F_2(q)=\sum_{n=1}^\infty n^2q^n.
\end{equation}
We can calculate this sum by noting that
\[
q^2F_2(q)=\sum_{n=2}^\infty(n-2)^2q^n=F_2(q)+4\sum_{n=1}^\infty(1-n)q^n=F_2(q)+\frac{4q}{1-q}-q-4F_1(q).
\]
Hence, we find
\begin{equation}
    F_2(q)=\frac{q(1+q)}{(1-q)^3}.
\end{equation}
We then obtain
\[
\overline{n^2}=(1-q)F_2(q)=\frac{q(1+q)}{(1-q)^2}
\]
and 
\[
\overline{n}^2=\frac{q^2}{(1-q)^2}.
\]
We can now write the standard deviation of $n$ as
\begin{equation}
    \sigma=\sqrt{\overline{n^2}-\overline{n}^2}
    =\frac{\sqrt{q}}{1-q}.
\end{equation}
This shows that the statistical fluctuations in the time duration of the infection as $q$ grows to 1.
By reducing $q$ one not only decreases $\overline{n}$ but also $\sigma$.
\subsection{Statistical prediction of active, recovery, and death cases}
\label{stat_predict}
It is very important for government officials to have estimates of
the number of active (infected) cases in a given population, since
these people are the ones who spread the disease.
Based on  knowledge of new daily infections (new confirmed cases), which we can obtain from COVID-19 databases that are updated at least every day with
data from across the globe, we can make several estimates.
One simple estimate of the number of active cases at time
$t_k$ is 
\beq
A(t_k)=C(t_k)-\mr(t_k)-D(t_k)\approx C(t_k)-C(t_k-\tau),
\label{active_simple}
\eeq
where $C(t_k)$ is the number confirmed cases, 
$\mr(t_k)$ is the number of recovered cases, $D(t_k)$ represents the accumulated number of death cases at time $t_k$.
Another approach is to make a statistical estimate for the number of active cases at time $t_k$ that follows the probabilistic process described in Sec.~\ref{let_rec}, which is
the following
\beq
A(t_k)=-\sum_{i=-1}^k q^{i}P_0\Delta S(t_{k-i})=
\sum_{i=0}^k q^{i}\Delta C(t_{k-i}),
\label{active_stat}
\eeq
where $A(t_k)=P_0I(t_k)$, $\Delta S(t_n)=S(t_n)-S(t_{n-1})$ and, likewise,
$\Delta C(t_n)=C(t_n)-C(t_{n-1})$.
The index $k$ starts at 0.
We also have the following probabilistic estimates for the
daily increase of recovered and death cases at the $k$-th day
\begin{align}
    \Delta \mr(t_k) &= \sum_{i=0}^{k-1} q^{i}p\Delta C(t_{k-1-i}),\\
    \Delta D(t_k) &= \sum_{i=0}^{k-1} q^{i}s\Delta C(t_{k-1-i}),
    \label{rec_death}
\end{align}
These probabilistic estimates complement the predictions from the epidemiological dynamical system model.
The probabilistic aspect of the pandemic evolution
can be most clearly seen when one considers short time variations,
such as daily data on the number of new confirmed, recovered, or
death cases.
Also this allows us to make estimates of the evolution of 
the active, recovered, and death cases solely from the confirmed
cases time series.
This is most important where the recovered cases are not available
or are less reliable, specially when one is counting the recovery
of outpatients, since they may not take a second test to show
they are actually free of the infection.
\subsection{Estimates of the parameters used in the model}
We now make some estimates for the parameters
$\nu$, $\mu$, $\rho$, and $\lambda$ present in the dynamical system given in Eq.~(\ref{modeloSIR}).

\subsubsection{Birth and death rates}
In order to make our model more precise, we obtained
the annual birth rate (ABR) and the annual mortality rate (AMR)
from the most recently published data from Germany and from Brazil before the spread of the pandemic.

We also converted these annual rates into daily rates using the geometrical progression formulas
\[
\begin{aligned}
(1+\nu)^{365} &=1+ABR,\\ 
(1-\mu)^{365} &=1-AMR.
\end{aligned}
\]
Hence, we find the daily birth and mortality rates
\beq
\begin{aligned}
    \nu
    &=(1+ABR)^{1/365}-1=e^{\frac1{365}\ln(1+ABR)}-1\\
    \mu
    &=1-(1-AMR)^{1/365}=1-e^{\frac1{365}\ln(1-AMR)}
\end{aligned}
\eeq
The German data on birth and death rates were obtained from the Federal Statistical Office \cite{destatisGeboren, destatisGesterben}.
The Brazilian data was obtained from the Brazilian Institute of Geography and Statistics (IBGE), which are from
2018.
For the birth rate we divided the number of born alive infants by the population estimate for 2018 and, likewise,
the number of deaths by the 2018 population estimate.
The data on the born alive was collected from the site
 \url{https://sidra.ibge.gov.br/tabela/2609}, the number of deaths from the site \url{https://sidra.ibge.gov.br/tabela/2654}, and
 the population estimate from the site \url{https://www.ibge.gov.br/estatisticas/sociais/populacao/9103-estimativas-de-populacao.html?edicao=22367&t=resultados}.
The birth and death rates used are shown in the table \ref{numuRates} below.
We provide table \ref{dailyDeaths} with information on the
average total daily deaths (without the born dead) before the pandemic based on annual death rates from 2018. This should be compared with the
average daily deaths due to COVID-19 as another
way of assessing the severity of this pandemic.
\begin{table}[ht]\caption{Birth and death rates} 
\label{numuRates}
\resizebox{\columnwidth}{!}{
\begin{tabular}{c|c|c|c|c}\hline
Population &  ABR (year$^{-1}$) & $\nu$ (day$^{-1}$)  & AMR (year$^{-1}$) & $\mu$(day $^{-1}$)\\\hline
Germany& 0.0095 &$2.591\times10^{-5}$ & 0.0115& $3.169\times10^{-5}$\\
Brazil& $0.0139$ & $3.7844\times10^{-5}$& 
$6.1560\times10^{-3}$ & $1.6918\times10^{-5}$\\
Paraíba & $0.0148$ & $4.0139\times10^{-5}$
& $6.5325\times10^{-3}$& $1.7956\times10^{-5}$\\
Campina Grande & $0.0158$ & $4.2943\times10^{-5}$ & $7.1342\times10^{-3}$ & $1.9616\times10^{-5}$\\
\hline
\end{tabular}
}
\end{table}
\begin{table}[ht]\caption{Pre-pandemic populations and average daily deaths}
\label{dailyDeaths}
\begin{tabular}{c|ccc}\hline
Location &  $P_0$ & $\mu$(day $^{-1}$) & average deaths (day$^{-1}$)\\\hline
Germany& 83,149,360 &$3.169\times10^{-5}$ &  2635\\
Brazil& 211,049,527 & $1.6918\times10^{-5}$&3570\\
Paraíba &  4,018,127&$1.7956\times10^{-5}$&72\\
Campina Grande &409,731 & $1.9616\times10^{-5}$&8\\
\hline
\end{tabular}
\end{table}
\FloatBarrier
\subsubsection{Contagion rate}
\label{conRate}
We use the following method to estimate the time-dependent contagion rate.
From Eq.~\eqref{modeloSIR}, we can write
\beq
\kappa(t) =\frac1{S(t)}\left[\frac1{I(t)}\frac{dI(t)}{dt}
+\mu +\frac1\tau\right].
\eeq
This can be approximated by
\beq
\kappa_k=\kappa(t_k) \approx\frac1{S(t_k)}\left[\frac1{A(t_k)}\frac{\Delta A(t_k)}{\Delta t}+\mu +\frac1\tau\right],
\label{kappa_est}
\eeq
where $\Delta t=1$day.
The daily approximation has too much fluctuation.
A better approach is to replace the daily derivative by the slope of a linear regression of 7 consecutive active cases
data points $\{A_k, A_{k+1}, ..., A_{k+6}\}$.
One rolls this week-long window over the entire set of data
points calculating $\kappa_k$.
For the last six days of data, we use a backwards window
with the data points $\{A_k, A_{k-1}, ..., A_{k-6}\}$.
We chose the week interval, because all epidemiological data
we have seen present weekly modulations.
The most difficult part of this estimating method of $\kappa(t)$
occurs at the beginning of the data sets, when the number
of active cases is very small.
From Eq.~\eqref{kappa_est}, one can see that for small values of $A(t_k)$,
any errors in the derivative approximation $\frac{\Delta A(t_k)}{\Delta t}$ are amplified.
Therefore, in most cases we discard the first days or weeks
of epidemiological data, usually up to the day the first death
occurred.
In some cases, we also had to apply a cutoff to eliminate the
highest values of $\kappa(t)$ right at the beginning of the time
series.
Apart from the numerical errors described above, notice that
any new imported infected 
case contributes strongly to the spread of the disease near
the outbreak of the epidemic.
Hence, at the beginning $\kappa(t)$ tends to be very high.
This also reflects the fact that at the start of the pandemic
most populations did not keep social distancing nor used PPEs
such as masks.
\subsubsection{Lethality and recovery rates}
\label{letRecRates}
We use the following method to estimate the time-dependent lethality rate.
Based on Eq.\eqref{rec_death}, we can write the probability
of dying for those infected with COVID-19 in a one day time
interval as
\beq
    s_k= \dfrac{\Delta D(t_k)}{\sum_{i=0}^{k-1} q^{i}\Delta C(t_{k-1-i})}.
\eeq
From this we obtain the daily lethality probability $P_\lambda(k)$ from Eq.~\eqref{probRecDea}.
Afterwards, from Eq.~\eqref{let_rec} we find the lethality rate $\lambda_k$ for the $k-$th day.
Consequently, from $\rho_k+\lambda_k=1/\tau$ we also find the
recovery rate $\rho_k$ at the $k-$th day.
The simpler approach of estimating $s_k$ with 
$\Delta D_k/A_{k-1}$ introduces far more fluctuations.
\subsection{Forecasting}
\FloatBarrier
We use a list with the last three weeks of $\kappa(t)$ data to calculate
the transition probabilities of a simple two-state Markov chain as shown
in Fig.~\ref{mchain}.
From this list, we obtain a list of differences
$\Delta\kappa_i=\kappa(t_i)-\kappa(t_{i-1})$.
Based one this, we make two probability distributions, one for increments in
$\kappa(t)$ and the other for decrements.
If $\Delta\kappa_i<0$, in the differences list, we replace it with 0, otherwise we
replace it with 1.
From this list of 0's and 1's, we obtain a list of lengths of the continuous
sequences of 0's and 1's.
We then obtain the average length of continuous stretches of 0's and 1's.
We call the average length of increments of the contagion rate  $\bar n_+$
and of decrements $\bar n_-$.
Based on this we find the following transition probabilities
\beq
\begin{aligned}
    q_{++} &= \frac{\bar n_+}{1+\bar n_+},\qquad p_{+-} &= 1-q_{++},\\
    q_{--} &= \frac{\bar n_-}{1+\bar n_-},\qquad p_{-+} &=  1-q_{--}.
\end{aligned}
\label{probs}
\eeq
Now we use the Markov chain to decide between increasing or decreasing $\kappa(t)$.
The values of the increments or decrements are taken from the two probability
distributions obtained as described above.
We then randomly obtain predicted values of $\kappa$ for two weeks based on
a three-week history of data.
We repeat this process for 1000 times and make an average of all these
trajectories as it is depicted in Fig.~\ref{casosConfirmadosBra}.
From these trajectories we also obtain a 95\% confidence interval.
In the figure, we compare actual data with our predictions for the two most
recent weeks.
Using this same approach, we can forecast the lethality rate based on
a list of daily variation of $\lambda(t)$.
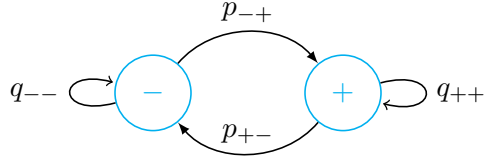
\begin{figure}[h!]
    \begin {center}
    \begin {tikzpicture}[-latex ,auto,node distance=2.5cm, on grid ,
    semithick ,
    state/.style ={ circle ,top color =white , bottom color = white ,
    draw,processblue , text=processblue , minimum width =1 cm}]
    \node[state] (A) at (0, 0) {$-$};
    \node[state] (B) [right =of A] {$+$};
    \path (A) edge [loop left =180] node[left] {$q_{--}$} (A);
    \path (A) edge [bend left=50] node[above =0.0 cm] {$p_{-+}$} (B);
    \path (B) edge [bend left =50] node[above] {$p_{+-}$} (A);
    \path (B) edge [loop right =50] node[right] {$q_{++}$} (B);
    \end{tikzpicture}
    \end{center}
    \caption{Markov chain diagram with transition probabilities. In the $-$ state $\kappa(t)$ is decreasing
    in time, while in the $+$ state, it is increasing.}
    \label{mchain}
\end{figure}
\FloatBarrier
\section{Results and Discussion}
The data of the number of confirmed, recovered and
 death cases from Germany and Brazil were obtained from the site <https://data.humdata.org/dataset/novel-coronavirus-2019-ncov-cases>, (accessed on 11/01/2020, with data collected until 10/31).
 We obtained the time series of confirmed and death cases of the State of Paraíba and the City of Campina Grande in the site:
\url{https://data.Brazil.io/dataset/covid19/_meta/list.html}
(accessed on 11/01/2020, with data collected until 10/31).

We used the Odeint function of the Python's scientific library package SciPy \cite{scipy1.0} to integrate the ODE system of Eq.~\eqref{modeloSIR} with the integration time-step $dt=1.0/24$, 
which corresponds to an hour when the time unit is a day.
In the cases investigated, we took $\tau$ 13-14~days.
We chose the value that provided the best fit of the active cases with the delay estimate given in Eq.~\eqref{active_simple} when the active cases data was available, such as in the cases of Germany and Brazil. 
When the active cases data was not available, we chose $\tau$
that would give the best fit between the theoretical
prediction for the active cases and the corresponding estimate
of Eq.~\eqref{active_simple}.
We suppose this parameter does not change appreciably during the time scale of the outbreak of the pandemic until now,
or at least until more efficient treatments are discovered.
 The initial values used are:
 $S(0)=1-C_0/P_0$,  $I(0)=A_0/P_0$, $R(0)=\mr_0/P_0$, and $M(0)=D_0/P_0$, with $P_0$ being the population just before the outbreak of the pandemic, $C_0$ is the initial number of confirmed cases, $A_0$ is the initial number of active cases, $\mr_0$ is the initial number of recovered cases, and $D_0$
 is the initial number of death cases, usually either 0 or 1.
 We now apply our model to the four cases of COVID-19 dissemination: in Germany, in Brazil, in the State of Paraíba, and in the City of Campina Grande.
 
In Fig.~ \ref{contagioObitosBRA}, we show results of numerical simulation for a range of values of $\kappa$.
Unlike the other results, $\kappa$ is held constant during each time integration of the equations of motion
given in Eq.~\eqref{modeloSIR}.
This result is important in conveying the message of
the paramount importance of the contagion rate on the
possible outcomes of the pandemic.
Not only we observe an increase in the number of deaths
when the contagion rates increase, but we also see a
sharp transition.
When there is a growth in the contagion rate from $0.1$ to $0.15$, this gives rise to an extremely sharp increase in the number of deaths.
This implies that there is a critical value of $\kappa$,
around which there is a rapid increase in the total
number of deaths due to the pandemic.
Beyond the critical value, we see a saturation in the
total number of deaths.
The value of $\kappa$ that corresponds to $R_0=1$
in our model is $\kappa=\mu+1/\tau=0.0714$.
We believe, this reinforces the great relevance of social distancing, since increasing the average distance between people, we will be decreasing the rate of contagion and, consequently, decreasing also the number
of deaths due to COVID-19.
\begin{figure}[ht]
    \centerline{\includegraphics[scale=0.8]{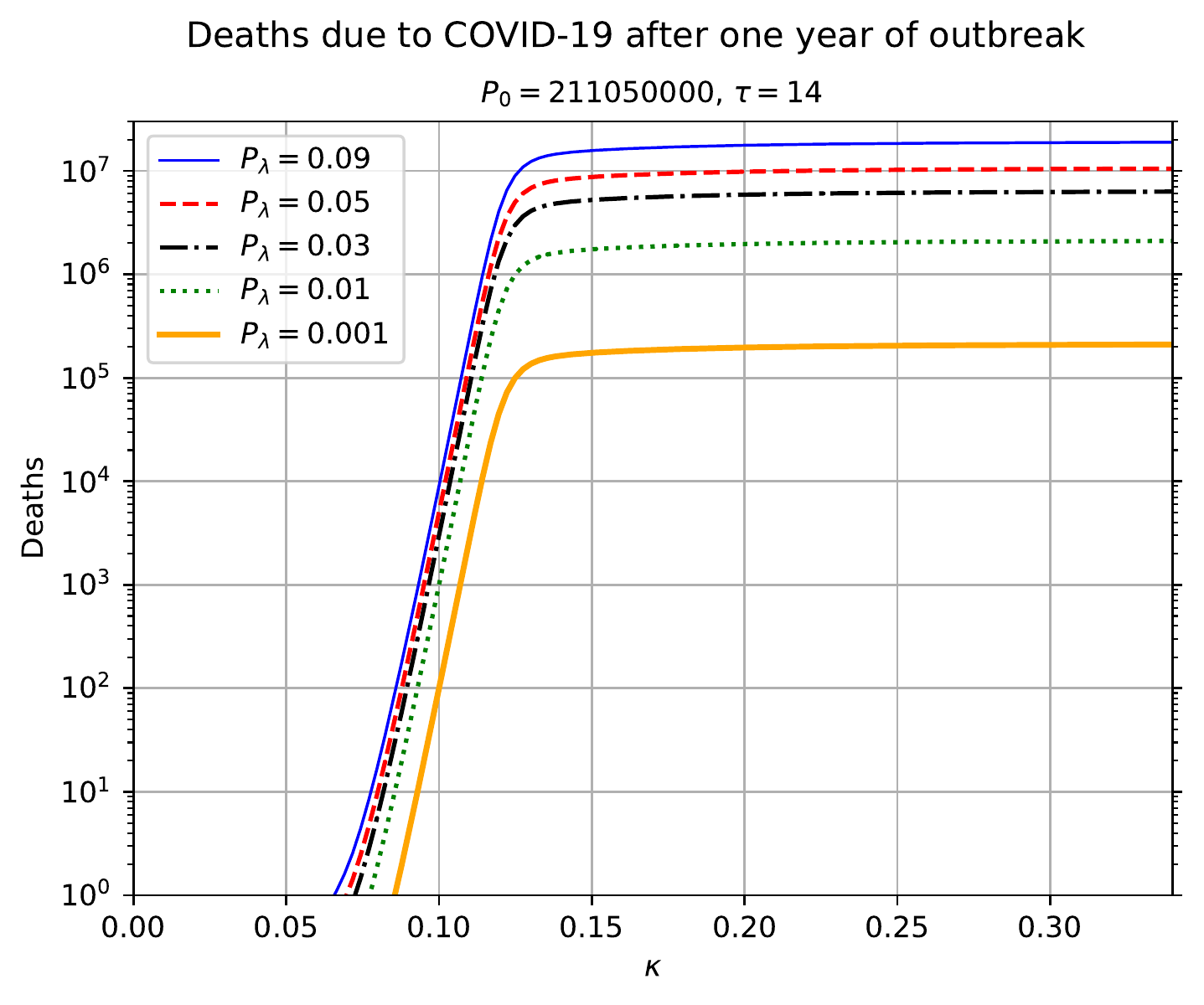}}
    \caption{Total number of accumulated deaths as a consequence of the epidemic as a function of the contagion rate $\kappa$.
    The total time duration for each value of  
     $\kappa$ is 365 days. The used parameters in this simulation are indicated above the figure. The value of $\kappa$ that
    corresponds to $R_0=1$ is $\kappa^*=\mu+1/\tau=0.0714$.
    }    
    \label{contagioObitosBRA}
 \end{figure}   
 \FloatBarrier
\subsection{Evolution of COVID-19 in Germany}
We consider the case of Germany as a benchmark test for
our epidemiological model.
This is so because it is widely believed that the cases
from Germany are better accounted for than in most other countries, with widespread testing of the population \url{https://www.labmate-online.com/news/laboratory-products/3/breaking-news/how-germany-has-led-the-way-on-covid-19-testing/52141}.
The initial population considered is $P_0=83.14936$ millions. 
The first contagion was registered on 01/27/2020 and the first
death due to COVID-19 was registered on 03/09/2020.
We chose this latter date as the initial point of our numerical integration.

In Fig.~\ref{confirmedCasesGer}{\bf A}, we show the official
data on the confirmed cases of COVID-19 plotted alongside the theoretical prediction, in which a very good agreement with the
epidemiological data was obtained.
The contagion rate function used is given in Fig.~\ref{kappa_tGer}.
In frame {\bf B}, we plot the recovered cases
data along with the theoretical predictions based on our model.
In frame {\bf C}, we plot the death cases with respective theoretical curve.
In frame {\bf D}, we plot the active cases
data along with the theoretical predictions based on our model.
The theoretical fit is not as good as in the previous
figures, but it is still quite reasonable.
We see a slow decline in the number of infected.
Once the total number of confirmed cases basically saturates,
the evolution of the active cases can be traced
with a purely statistical model as the one we developed
in Sec.~\ref{let_rec}. 
From the statistical point of view the slow decay of the active
cases has to do with the large value of the dispersion in the duration of the infection, as shown in Sec.~\ref{sigma}.
We also plot two estimates of the active cases obtained solely
from an analysis of the confirmed cases data.
The delay estimate of active cases at a given time $t$
is based on Eq.~\ref{active_simple}.
The statistical estimate is obtained from the probabilistic
process given in Eq.~\ref{active_stat}.
This estimate came closer to the theoretical dynamical system model, what
shows its consistency with the probabilistic model of Sec.~\ref{let_rec}.
Both estimates came fairly close to the real data.
In the present case, they seem to bracket the real data.
Note also, that at the last two weeks of the time series we validate
the forecasting model based on the Markov chain.
Here, we show a 95\% confidence band along two weeks.
In this case the all epidemiological data fell within the predictions.

In Fig.~\ref{kappa_tGer}{\bf A}, we show the graph with the
parameter estimation for the contagion rate function $\kappa(t)$.
At the beginning of the time series, the contagion rate is
very high. To achieve the best fit between theory and the data
in Fig.\eqref{confirmedCasesGer}, we had a cutoff at $\kappa=0.41$, otherwise we used the estimation method described in Sec.~\ref{conRate} to obtain this time series.
Note that we observe weekly modulations, likely reflecting
the fact that in weekdays the contamination is higher than
in weekends.
The rapid decrease of $\kappa(t)$ intensifies approximately around 03/22/2020, when strict social distancing
rules were imposed by the German Government \cite{dw2020, dehning2020inferring}.
This shows that these measures were very efficient in
containing the spread of the epidemic.
In frame {\bf B}, we show the corresponding reproduction number
time series. Around April/09, the $R_0(t)$ becomes below 1
and stays there until about June/13, when for a period
of roughly 10 days it rose slightly above 1.
In frame {\bf C} we show the time-dependent lethality and
recovery rates, $\lambda(t)$ and $\rho(t)=1/\tau-\lambda(t)$, respectively.
These estimates were obtained using the method described in Sec. \ref{letRecRates}.
\FloatBarrier
\begin{figure}[ht]
    \centerline{\includegraphics[width=\textwidth]{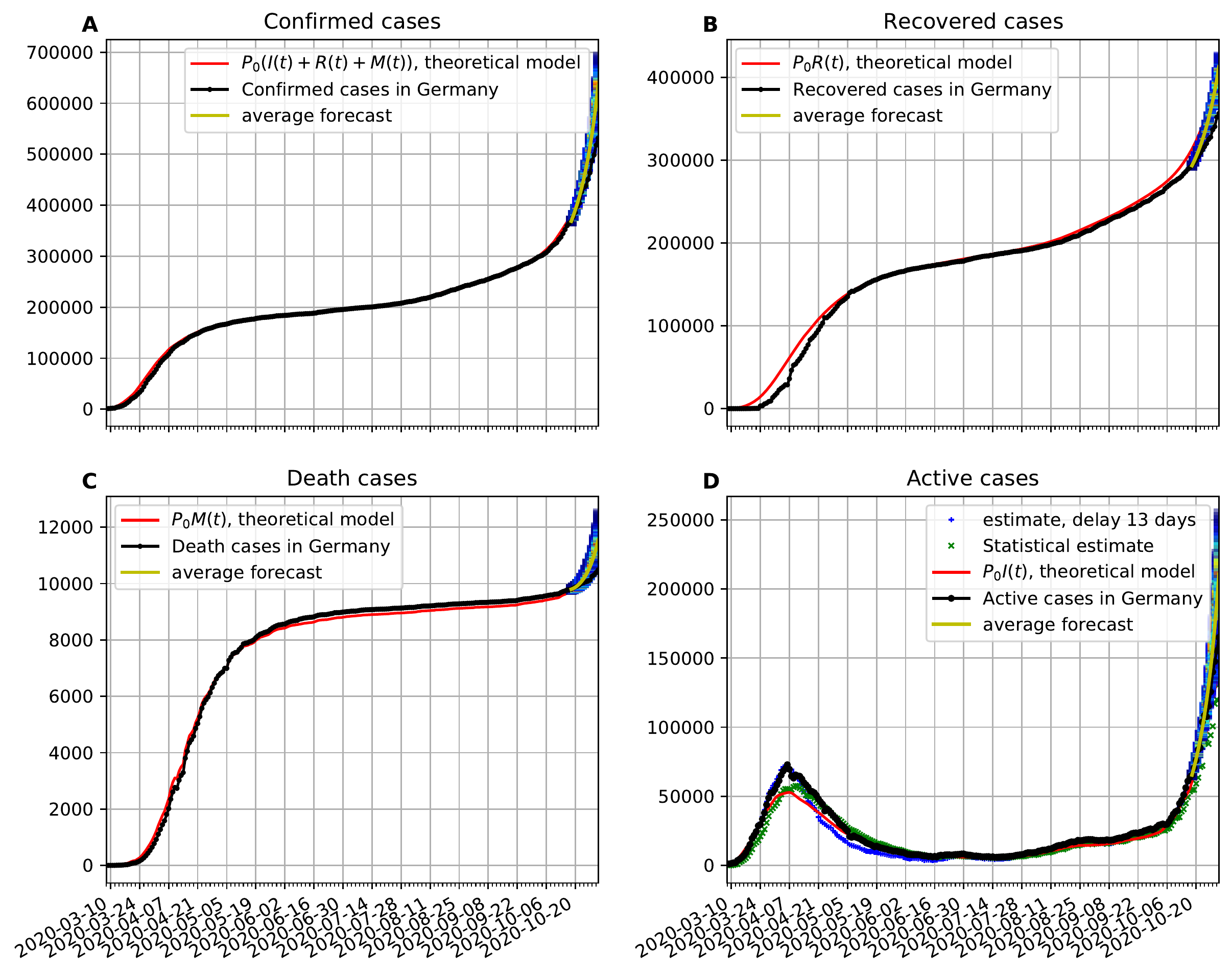}}
    \caption{{\bf A} Confirmed cases data compared with the theoretical prediction. 
     {\bf B} Recovered cases data in comparison with the theoretical prediction.
     {\bf C} Death cases data compared with the 
    theoretical prediction.
    {\bf D} Active cases compared with delay estimate, statistical estimate, and the predicted theoretical curve.
     In all cases the functions $\kappa(t)$, $\lambda(t)$, and $\rho(t)$ vary in time according to Fig.~\ref{kappa_tGer}.
    }
    \label{confirmedCasesGer}
   \end{figure} 
   

    
    
  \begin{figure}[ht]
    \centerline{\includegraphics[width=0.6\textwidth]{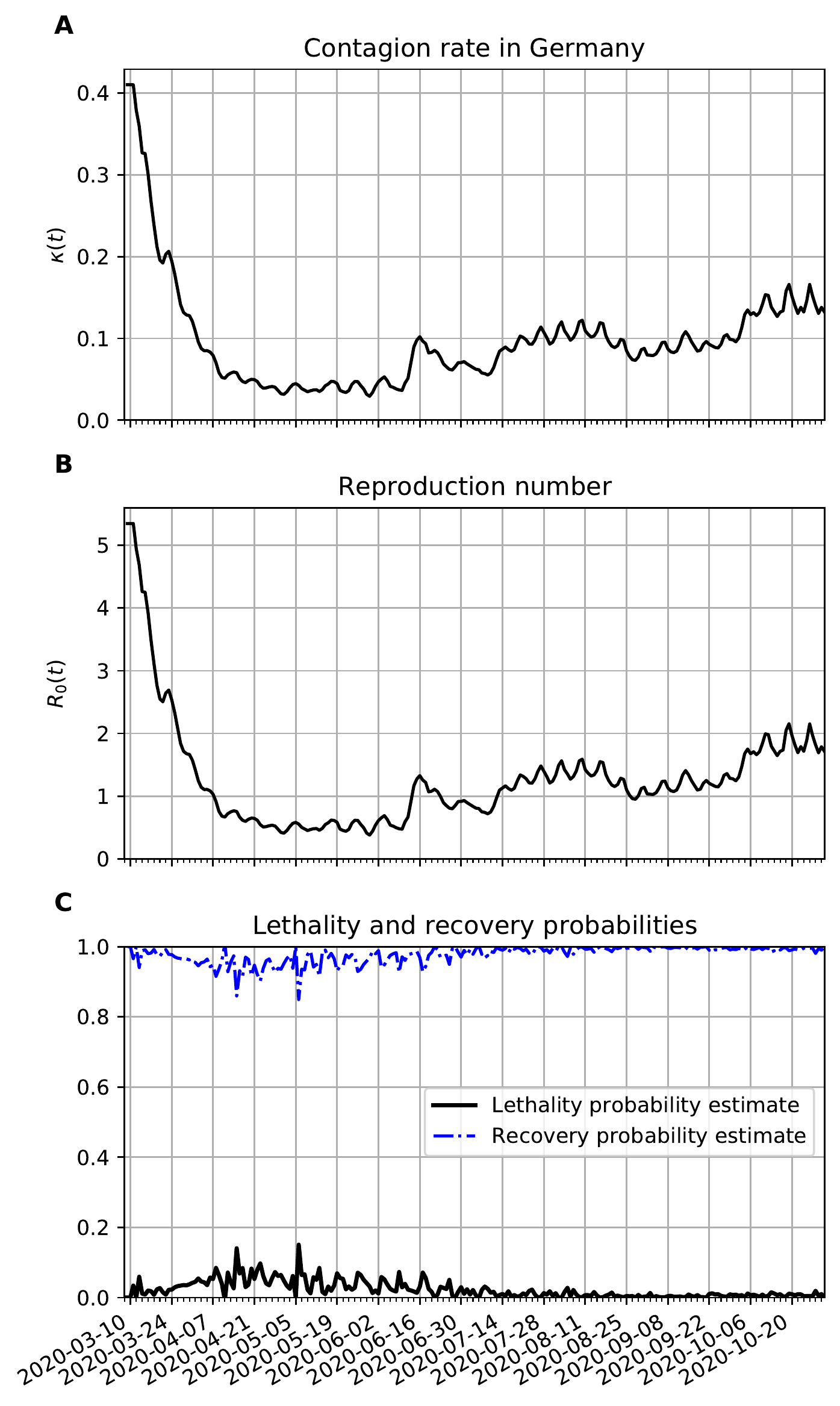}}
    \caption{Estimation of parameters of the epidemiological model. 
    The initial date of the time series is the date in which the first death due to COVID-19 occurred in Germany.
    {\bf A} the time variation of the contagion rate. 
     The initial rate was very high, possibly due to exogenous cases in the early stages of the pandemic.
    To better fit the data with the predictions of our model
    we had to truncate $\kappa(t)$ with values above 0.41 at
    the beginning of the series.
    {\bf B} the reproduction number. {\bf C} the lethality and recovery rates. 
    }
    \label{kappa_tGer}
    \end{figure}
  
   \FloatBarrier
\subsection{Evolution of COVID-19 in Brazil}
We consider the initial time the day of the first
death case in Brazil.
 We take Brazil's population to be approximately
 $P_0=211.050\times10^6$. 
 The initial date of the time series of the data sets
 used is 03/20, one day after the first official death
 due to COVID-19 was recorded.
 We used the pre-pandemic birth and death rates shown
 in Table II. The estimates of contagion, lethality and
 recovery rates are obtained according to the methods described
 in Secs. \ref{conRate} and \ref{letRecRates}, respectively.
 The time variation of the contagion rate reflects the
 fact that the population slowly took heed of the gravity of the pandemic and started adopting social
 distancing measures and  using PPE.

 In Fig.~ \ref{casosConfirmadosBra}{\bf A}, we compare the official data (blue dots) of confirmed cases with
 the corresponding time series obtained from our proposed
 model. 
In frame {\bf B}, we show a comparison between the number of recovered cases (blue line) and the predicted number of recovered cases predicted by the
theoretical model.
The discrepancies in the fitting may have to do with delays in the confirmation of the recovered cases,
as we can see in the jumps that occurred from 06/07 to 06/08 
and from 07/01 to 07/02.
One possible source of systematic error, towards sub-notification of recovered cases, could occur in milder cases. 
Also, recovered outpatients may fail to take another test to confirm their recoveries.
In frame {\bf C}, we show a comparison between the number of death cases due to COVID-19 (in blue) and the number of deaths predicted by the theoretical model.
In frame {\bf D}, we plot the active cases
data along with the theoretical predictions based on our model.
The theoretical fit is not as good as in the previous
figures, but it is still quite reasonable.
We again  validate
our forecasting model with a 95\% confidence interval based on the Markov chain.
In all cases the epidemiological data fell within the prediction
range, but we had to backtrack 4 weeks in relation to the forecasting in Germany.
The problem seems to be the artificial lack of report of new recoveries that can be seen in frame ${\bf B}$.

In Fig.~\ref{kappa_t}{\bf A}, we show the time evolution of the contagion rate $\kappa(t)$.
The sharp drop of this rate is likely due to the increase
of isolation and social distancing that grew at the
second half of March in Brazil.
Since about 03/27 it has been decreasing slowly on average, 
despite the weekly modulations.
This is certainly due to better precautions by the population (isolation, social distancing, hand washing, and the increased use of PPEs). 
In frame {\bf B}, we plot the reproduction number as a function of time. It is basically a scaled version of $\kappa(t)$.
Since about the beginning of July, $R_0(t)$ has been modulating
around 1.
Although, this is not enough since the number of active cases is
still very high.
In frame {\bf C}, we show the time evolution of the lethality and recovery rates, $\lambda(t)$ and $\rho(t)=1/\tau-\lambda(t)$,
respectively.
One sees more fluctuations near the beginning of the time series
likely because there were less active cases then.
Also, one sees that the lethality rate is decreasing on average.
This might be related with the increased amount of
testing in Brazil, which is still far below the necessary though.
It could also be related with increased sub-notification of
death cases.
Another possibility is that the medical treatment and procedures for the more severe cases of COVID-19 are being
better treated since May.
Whatever the case, this behavior should be further investigated.

\FloatBarrier
\begin{figure}[ht]
    \centerline{\includegraphics[width=\textwidth]{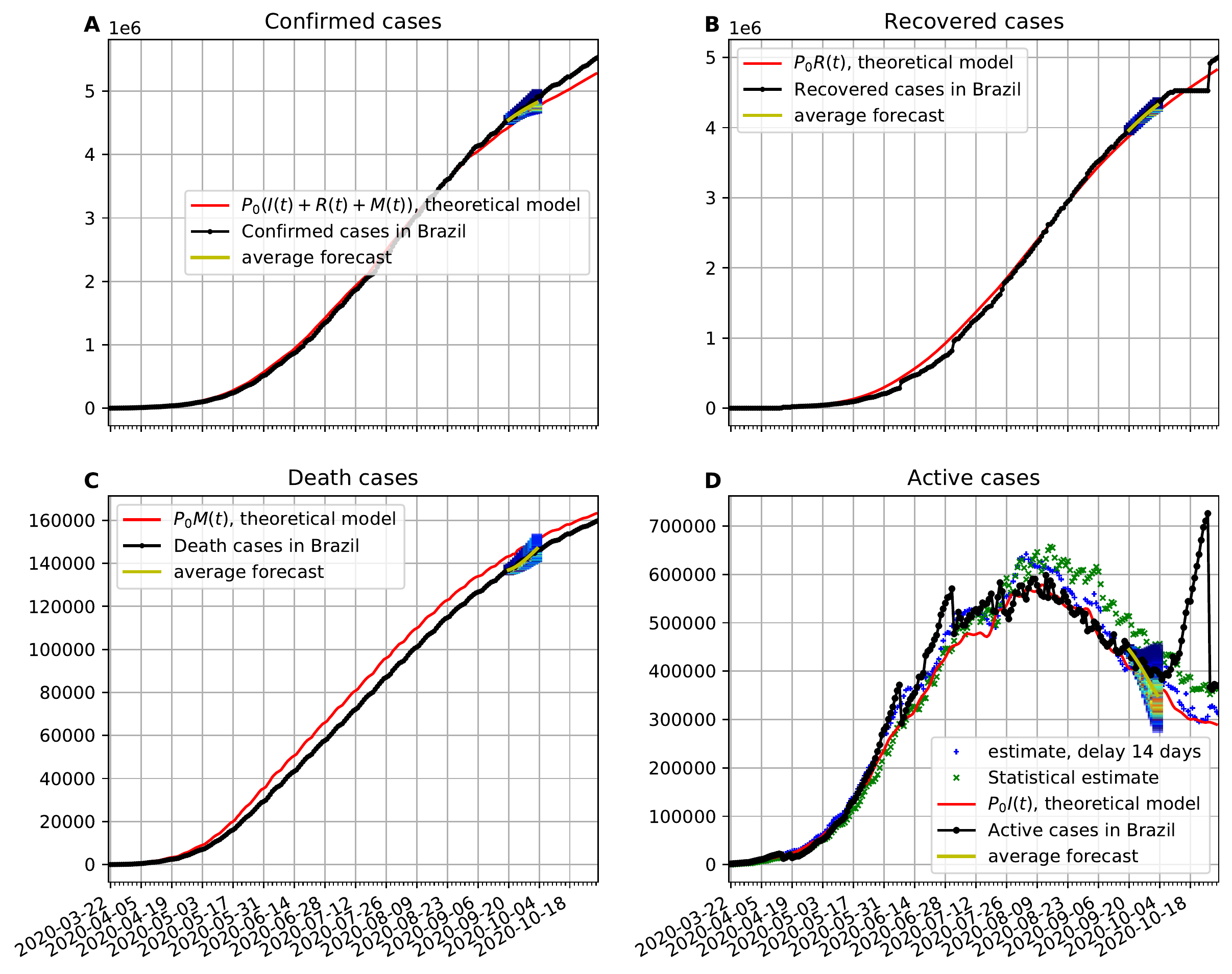}}
    \caption{ {\bf A} The number of official confirmed cases compared with the theoretical prediction. 
    {\bf B} Time evolution of the official number of recovered cases in comparison with the theoretical prediction.
    {\bf C}The number of official deaths compared with the 
    theoretical prediction.
    {\bf D} The time evolution of the official number of active cases in Brazil compared with the theoretical prediction.
    In all cases the functions $\kappa(t)$, $\lambda(t)$, and $\rho(t)$ vary in time according to Fig.~\ref{kappa_t}.
    }
    \label{casosConfirmadosBra}
   \end{figure} 


\begin{figure}[ht]
    \centerline{\includegraphics[scale=0.8]{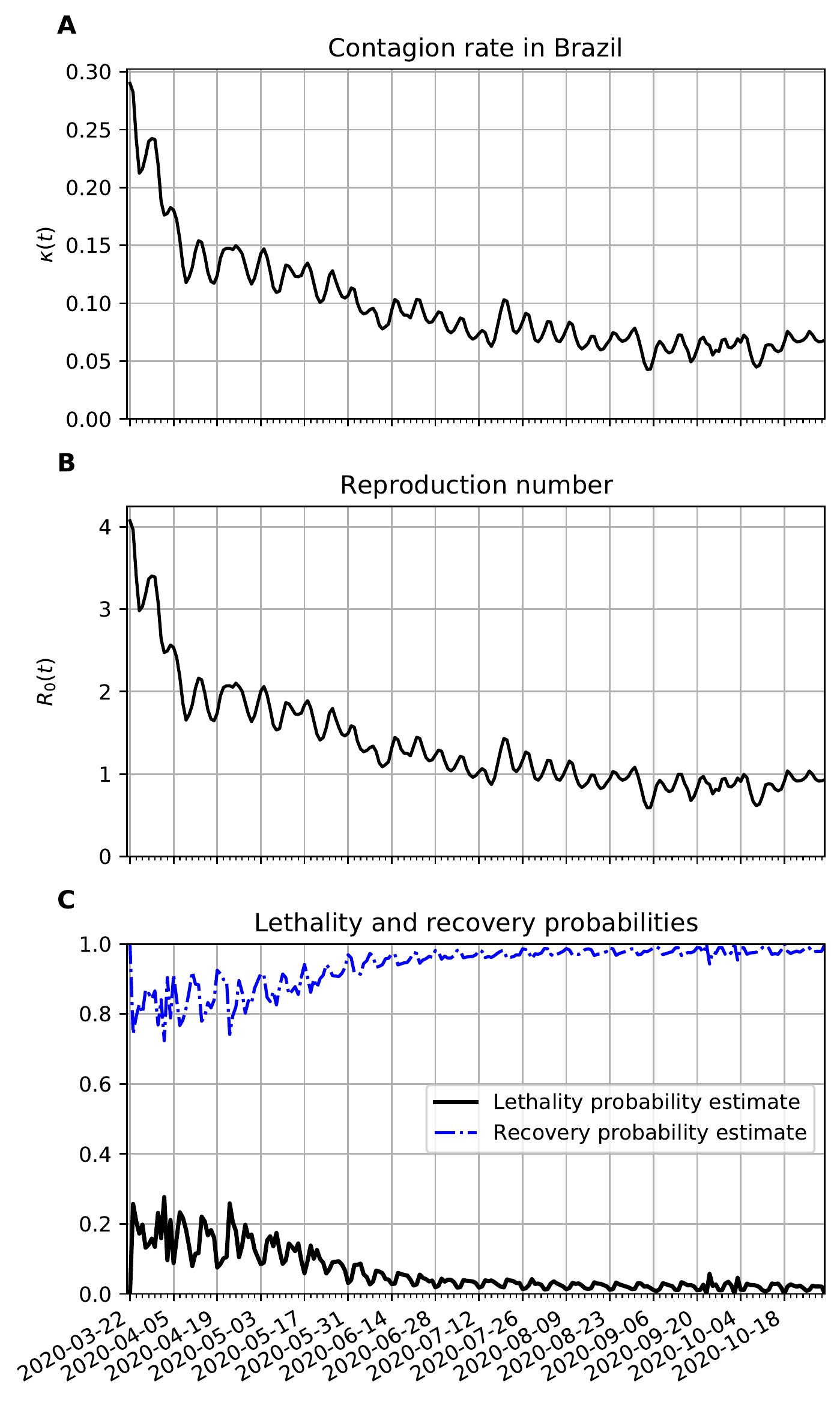}}
    \caption{
{\bf A} Time variation of the contagion rate  in Brazil.
    This function was obtained from the statistically estimated active cases as described in Sec. \ref{conRate}. The oscillations reflect weekly variations
    that can be seen in the number of daily new cases.
    {\bf B} Corresponding reproduction number evolution.
    {\bf C} The lethality and recovery rates obtained with the
    method described in Sec. \ref{letRecRates}.
    }
    \label{kappa_t}
    \end{figure}
\FloatBarrier


\FloatBarrier

\subsection{Evolution of COVID-19 in Paraíba}
The initial population of Paraíba is $P_0=4,018,127$.
The first case of COVID-19 contamination was registered
on 03/18 and the first recorded death on 04/06.
We did not plot the number of recovered cases because we have not been able, so far, to obtain this data for Paraíba.

In Fig.~\ref{conRecDeaActPB}{\bf A}, we compare the official data (blue dots) of confirmed cases with
 the corresponding time series obtained from our
 model.
In frame {\bf B} , we show a comparison between the number of death cases due to COVID-19 (blue line) and the number of deaths predicted by the theoretical model.
In frame {\bf C}, we show the estimated active cases.
One result is based on delay and the other on the statistical
method.
Both methods are described in Sec.~\ref{stat_predict}.
We also show epidemiological model estimate (solid red line).
Note also, that at the last two weeks of the time series we validate
the forecasting model based on the Markov chain.
Here, we show a 95\% confidence band along two weeks.
In this case the all epidemiological data fell well within the predictions.

In Fig.~\ref{kappa_tPB}{\bf A}, we show the time evolution of the contagion rate $\kappa(t)$.
Initially, the contagion rate is not as high as in Brazil's case, but takes longer to drop.
From about 04/06, the rate of contagion starts decreasing on
average, although with a higher amplitude of modulation as in
the case of Brazil.
As time passes, one sees also a decrease in the amplitude of the
modulations.
This is certainly due to better precautions by the population (isolation, social distancing, hand washing, and the increased use of PPEs). 
In frame {\bf B}, we plot the reproduction number as a function of time. It is basically a scaled version of $\kappa(t)$.
Since about the beginning of July, $R_0(t)$ has been modulating
around 1.
Although, this is not enough to end the pandemic since the number of estimated active cases is still very high.
In frame {\bf C}, we show the time evolution of the lethality and recovery rates, $\lambda(t)$ and $\rho(t)=1/\tau-\lambda(t)$,
respectively.
Again, one sees more fluctuations near the beginning of the time series
likely because there were very few active cases then.
Also, one sees that the lethality rate started decreasing on average after the first week of April.
This might be related with the increased amount of
testing in Paraíba, which is still far below the necessary though.
\FloatBarrier
\begin{figure}[ht]
    \centerline{\includegraphics[width=0.9\textwidth]{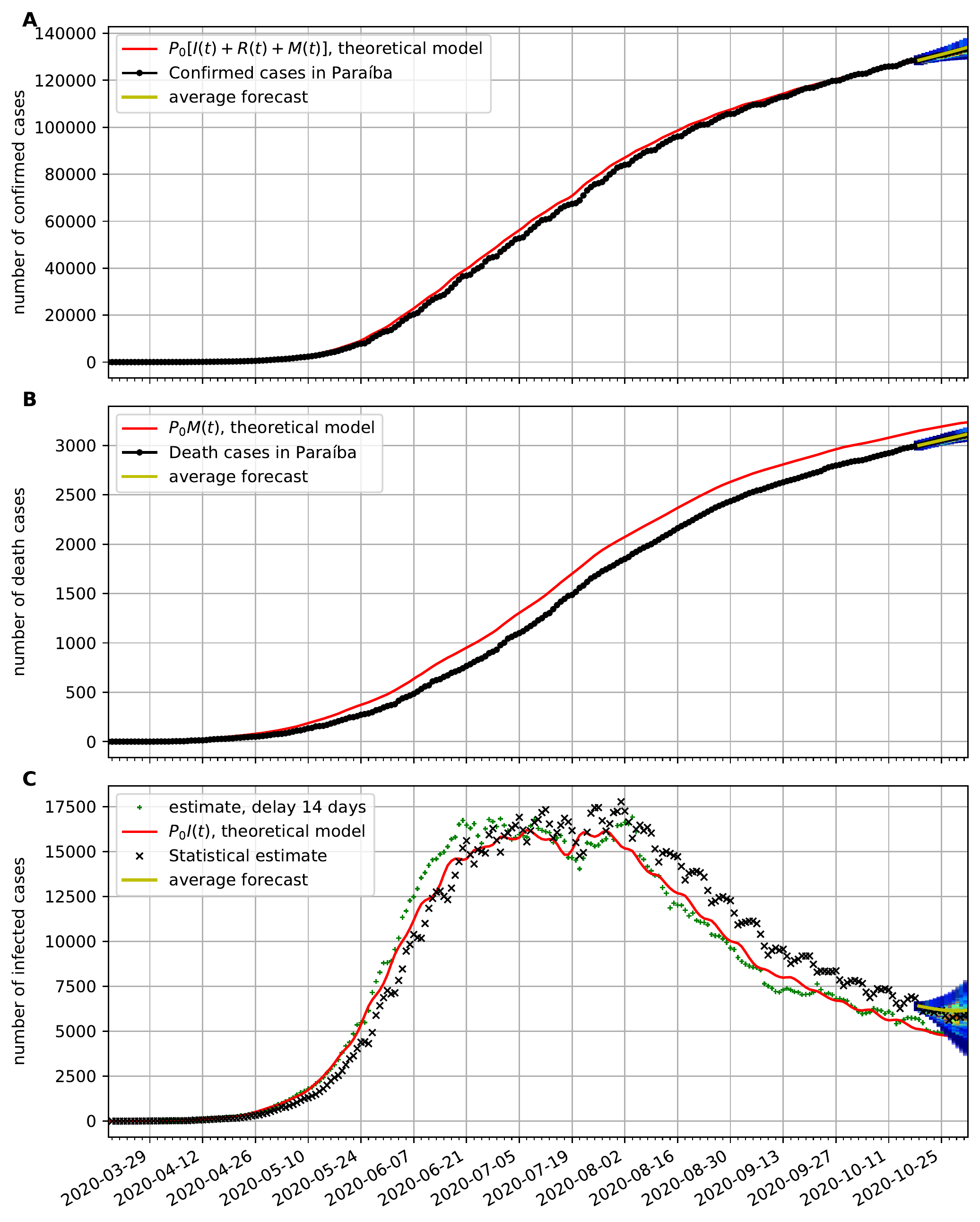}}
    \caption{
    The theoretical fit is obtained with $\tau=13$ and  the parameters  estimated in Fig.~\ref{kappa_tPB}.
    {\bf A} Comparison of the number of official confirmed cases with the theoretical prediction.
    {\bf B} Comparison of the number of official deaths due to COVID-19 infection with the theoretical prediction.
    {\bf C} Comparison of the number of the estimated number of active cases with the theoretical prediction.
    }
    \label{conRecDeaActPB}
\end{figure}

\begin{figure}[ht]
    \centerline{\includegraphics[scale=0.8]{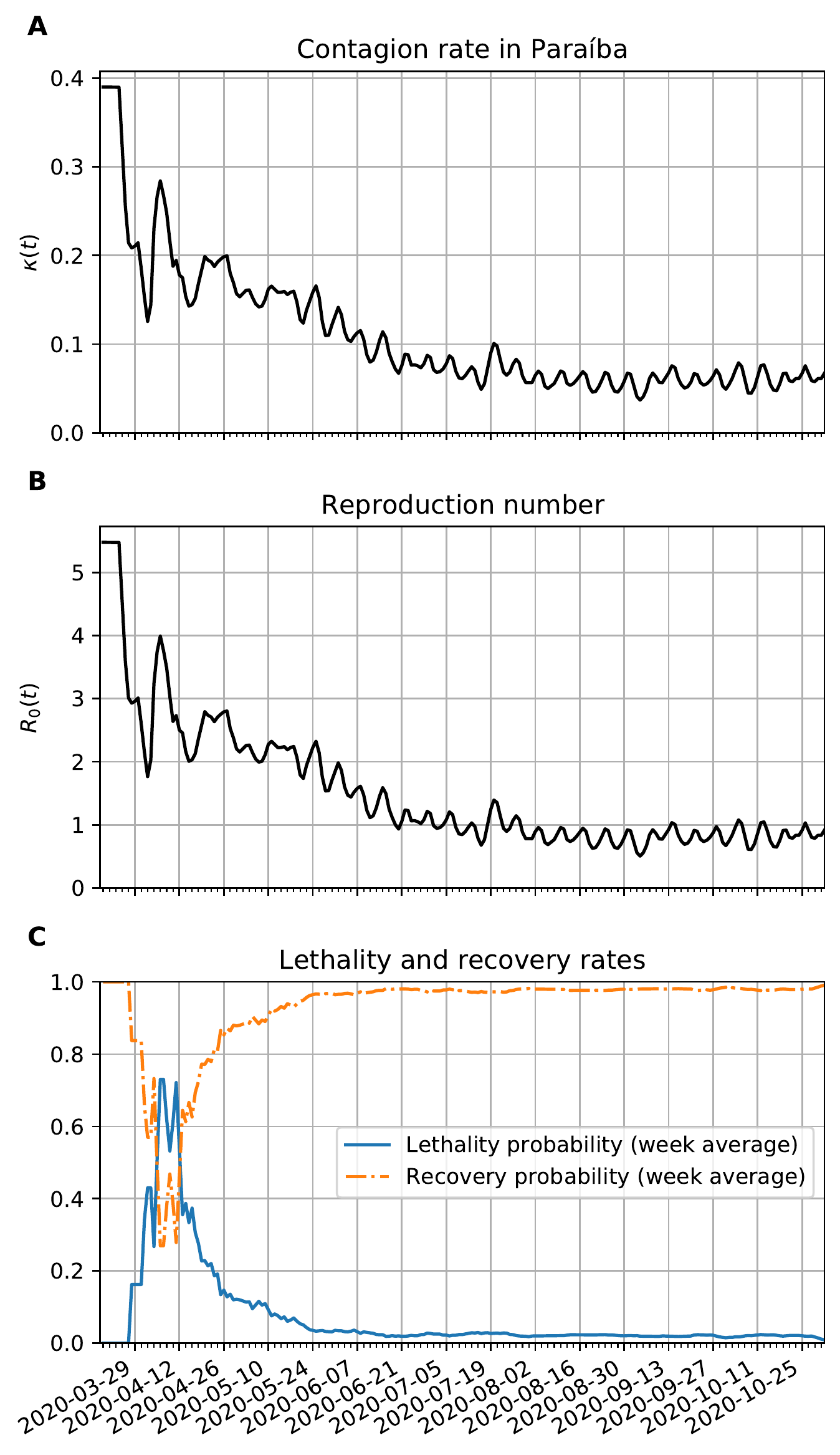}}
    \caption{{\bf A} Time variation of the contagion rate  in Paraíba.
    This function was obtained from the statistically estimated active cases described in Sec.~\ref{stat_predict} and the
    method of Sec.~\ref{conRate}.
    The initial cutoff value of $\kappa(t)$ is 0.38.
    {\bf B} Corresponding reproduction number evolution.
    {\bf C} Estimates for the lethality and recovery rates based on Sec.~\ref{letRecRates}.
    }    
    \label{kappa_tPB}
 \end{figure}   
\FloatBarrier

\FloatBarrier
\subsection{Evolution of COVID-19 in Campina Grande}
The first case of COVID-19 contamination was registered
on 03/27 and the first recorded death on 04/16.
We did not plot the number of recovered cases because we have not been able, so far, to obtain this data for Campina Grande.

In Fig.~\ref{conRecDeaActCG}{\bf A}, we compare the official data (blue dots) of confirmed cases with
 the corresponding time series obtained from our
 model.
In frame {\bf B} , we show a comparison between the number of death cases due to COVID-19 (blue line) and the number of deaths predicted by the theoretical model.
In frame {\bf C}, we show the estimated active cases.
One result is based on delay and the other on the statistical
method.
Both methods are described in Sec.~\ref{stat_predict}.
We also show the epidemiological model estimate (solid red line).
Note also, that at the last two weeks of the time series we again validate
the forecasting model based on the Markov chain.
Here, we show a 95\% confidence band along two weeks.
In this case the all epidemiological data fell well within the predictions.

In Fig.~\ref{kappa_tPB}{\bf A}, we show the time evolution of the contagion rate $\kappa(t)$.
Initially, the contagion rate is not as high as in Brazil's case, but takes longer to drop.
Only from about 06/07, the rate of contagion starts decreasing on
average, although with a higher amplitude of modulation as in
the case of Brazil or Paraíba.
As time passes, one sees also a decrease in the amplitude of the
modulations.
This is certainly due to better precautions by the population (isolation, social distancing, hand washing, and the increased use of PPEs). 
In frame {\bf B}, we plot the reproduction number as a function of time. It is basically a scaled version of $\kappa(t)$.
Since about the beginning of July, $R_0(t)$ has been modulating
around 1.
Although, this is not enough to end the pandemic since the number of estimated active cases is still very high.
In frame {\bf C}, we show the time evolution of the lethality and recovery rates, $\lambda(t)$ and $\rho(t)=1/\tau-\lambda(t)$,
respectively.
Again, one sees more fluctuations near the beginning of the time series
likely because there were very few active cases then.
Also, one sees that the lethality rate started decreasing on average after the first week of April.
This might be related with the increased amount of
testing in Campina Grande, which is still far below the necessary though.
\FloatBarrier
\begin{figure}[ht]
    \centerline{\includegraphics[width=0.9\textwidth]{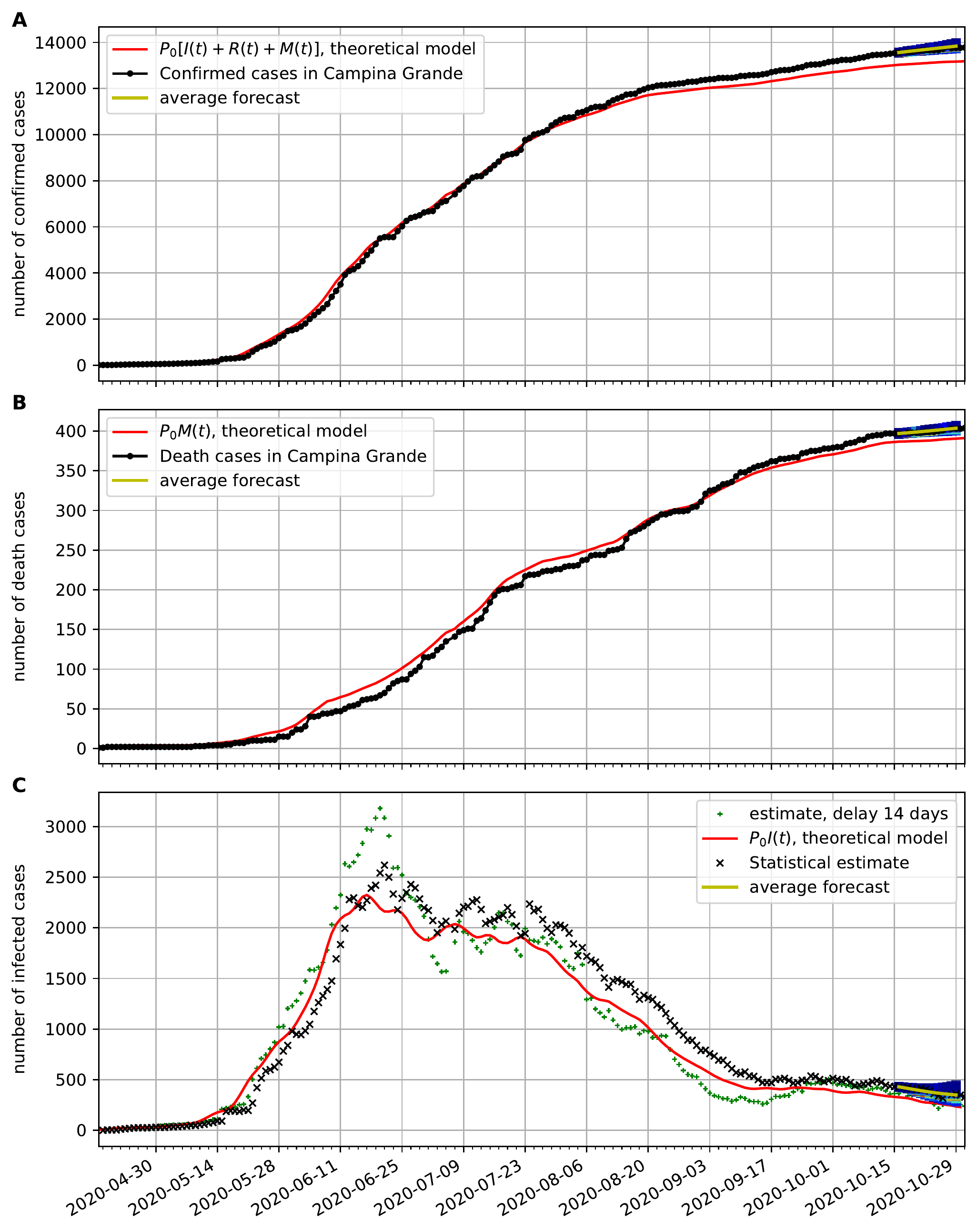}}
    \caption{
    The theoretical fit is obtained with $\tau=14$day and  the parameters  estimated in Fig.~\ref{kappa_tCG}.
    {\bf A} Comparison of the number of official confirmed cases with the theoretical prediction.
    {\bf B} Comparison of the number of official deaths due to COVID-19 infection with the theoretical prediction.
    {\bf C} Comparison of the number of the estimated number of active cases with the theoretical prediction.
    }
    \label{conRecDeaActCG}
\end{figure}

\begin{figure}[ht]
    \centerline{\includegraphics[scale=0.8]{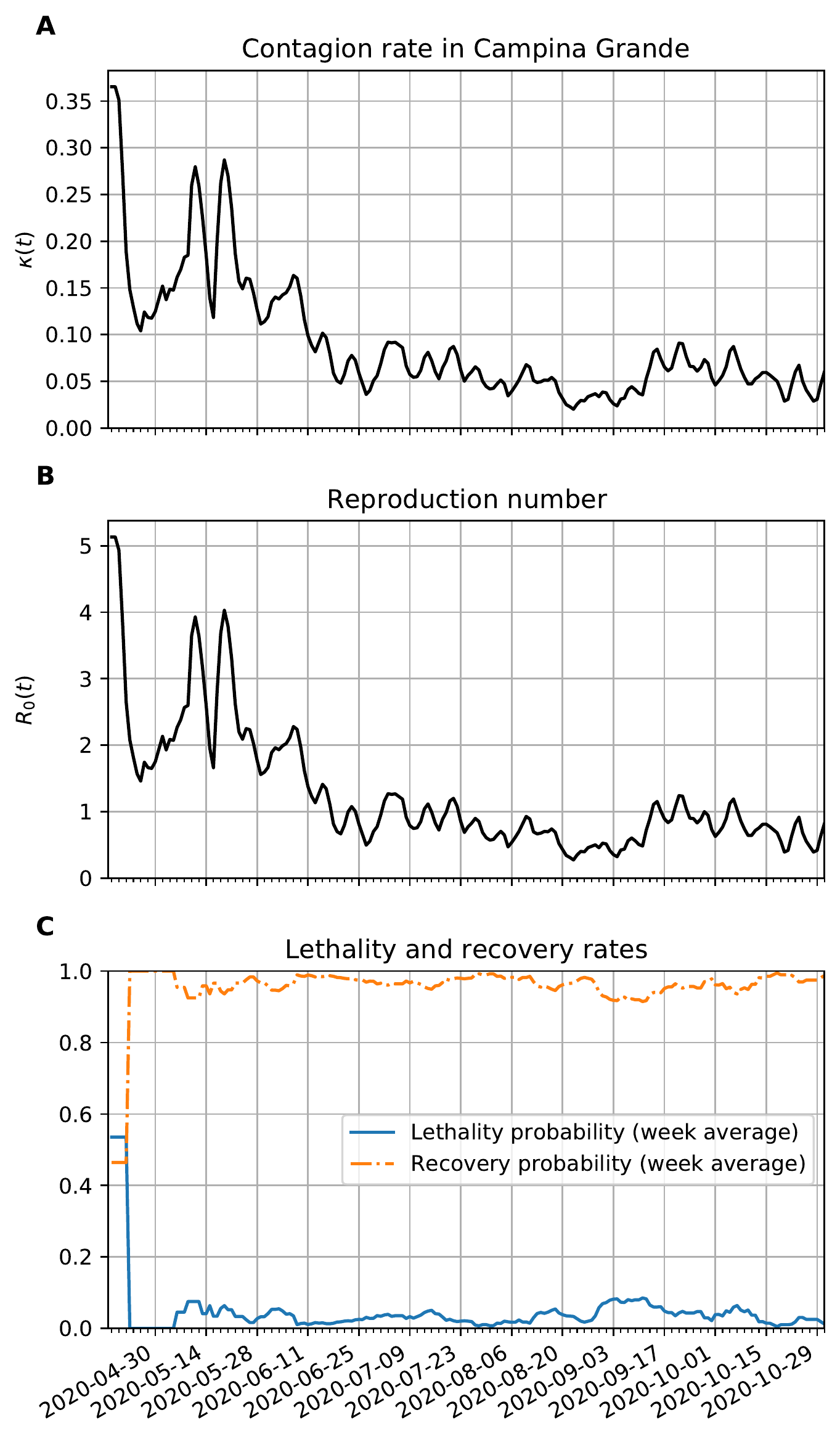}}
    \caption{{\bf A} Time variation of the contagion rate  in Paraíba.
    This function was obtained from the statistically estimated active cases described in Sec.~\ref{stat_predict} and the
    method of Sec.~\ref{conRate}.
    The initial cutoff value of $\kappa(t)$ is 0.3.
    {\bf B} Corresponding reproduction number evolution.
    {\bf C} Estimates for the lethality and recovery rates based on Sec.~\ref{letRecRates}.
    }    
    \label{kappa_tCG}
 \end{figure}   
\FloatBarrier
\section{Conclusion}
Here we summarize the main contributions of our epidemiological
study of the evolution of the COVID-19 pandemic in four populations. The proposed model is an adaptation of the SIR model  \cite{kermack1927contribution}, a SIRD model \cite{hethcote2000mathematics}, with some notable differences. 
In the SIRD model we replace the removed population by the recovered and the deceased.
We allow that the contagion rate varies in time so that it reflects the fact that
social distancing and isolation changes over time.
We developed two ways to obtain the active population from
the data on confirmed cases only.
In the first approach, the number of active cases at time $t$ is estimated simply by difference between the confirmed population at time $t$ minus the confirmed population at time $t-\tau$, where $\tau$ denotes the average time span of infection.
In the second approach, the number of active cases is estimated
by the probabilistic model proposed here.
Both approaches result in fairly accurate predictions of the active cases when the official data is available.
In cases in which the data on recovered cases is not available,
this approach could be the only way to estimate the number of
active cases.
We propose a new method to track automatically the contagion rate based on the statistical estimate of active cases.
We divided the data in weekly intervals and for each interval
we made a linear regression to obtain the slope, and based
on the equation in our model for the infected, we could obtain
a daily contagion rate. This contagion rate was fed back into
the equations of motion and we were able to fit the data
with a minimum of or no ad hoc interventions.
We noticed that the contagion rate could reach very high values
at the initial stages of the spread of the epidemic in all populations investigated.
This likely happens because of inherent numerical errors, as described in Sec.~\ref{conRate}, and multiple infections that are
imported from contaminated visitors or people returning from
trips abroad.
This type of dynamics is more relevant at the beginning, when
the local contaminated population is small and before barriers
on travelling are imposed by the governments.
Here, we also take into account the contribution from the pre-pandemic birth and death
rates to the evolution of the populations investigated. This could become relevant if the pandemic lasts for over a year and also
it is important as a comparison for the lethality of the pandemic.
According to the results exposed in the previous section, with our model we could
fit the official case data from Germany, Brazil, the State of Paraíba, and city of Campina Grande quite well.

The modeling of the spread of the pandemic in Germany is
very emblematic, since it clearly shows that the strict social distancing measures imposed by the government on 03/22/2020 were very effective in containing the disease,
reducing the $R_0$ from 2.8 down to roughly 1 about two weeks later, and further down to approximately 0.5 in more two weeks, according to our model.
 
 In the case of Brazil, we conclude that, based on the fit of the proposed model, the decrease of the reproduction number has been far more difficult and bumpier.
 This means the implanted social distancing measures are having an effect in thwarting the spread of the disease, but it has not enough, since to control the pandemic since $R_0(t)$
 has been oscillating around 1 and the number of active cases is
 still very high.
 
 The evolution of COVID-19 in the State of Paraíba has been similar to the national case. The contagion rate has been
 decreasing on average, but the modulations are even bumpier
 than in the national case.
 The amplitude of the modulations is considerably higher than
 in Brazil. This may be due to the smaller population involved.
 The effective reproduction number has been decreasing on average but it is still hovering around 1, with an estimated large number of active cases.

As was commented in the Introduction, Campina Grande adopted a social distancing policy one week earlier than the report of the first confirmed case.
Perhaps, due to this, the initial contagion rate was lower than
the corresponding ones in Brazil and Paraíba.
Despite of this, the rate of contagion did not decrease as fast as it did in Germany, Brazil or even in Paraíba.
It presents two major peaks in $\kappa(t)$ spaced apart by a week since the outbreak of the disease here.
Probably these could be linked to agglomeration events such
as in-branch governmental relief payments to unemployed people.
Also, one sees higher amplitude of the modulation of the contagion rate than in the other populations studied, this might
be linked to the smaller size of the population involved.

We also proposed and validated a simple forecasting method
 based on Markov chains and on our parameter estimation method for the evolution of the epidemiological
 data for up to two weeks. For the populations we investigated, the
 epidemiological data fell well within the 95\% confidence interval
 of our predictions.
 
Based on the results shown here, we conclude that the public health officials should
look into the local dynamics of the spread of the disease as they compare with the
theoretical predictions of models such as the one developed here.
In this way, they will know where the social distancing and isolation is being more
efficiently implemented. The models will be more relevant and accurate if there is more testing of the population. Even random testing should be considered, as one gains
statistical information on the spread of the disease and discovers where there is more
under-notification.
Also, cellphone data of the motion of the population, as
used by Peixoto \etal \cite{Peixoto2020} and Linka \etal \cite{Linka2020}, should be considered as a means
of predicting the contagion and of identifying hot spots of COVID-19.
This could help identify where there are more contagions: bars, churches, supermarkets, offices, pharmacies, hospitals, bakeries, restaurants, by delivery services, or family visits, etc. 
Furthermore, by comparing different local strategies one could gain insight on what
works better for slowing the spread of the disease.
It would be interesting to see the amount of correlation in
population mobility and contagion rate, this could be specially
relevant at the city level.
In a more refined work, one could couple several nearby cities into a network of populations.

%
\end{document}